\newcommand{\Sp}{\mathrm{Sp}}
\newcommand{\Ran}{\mathrm{Ran}\, }
\newcommand{\Orb}{\mathrm{Orb}}
\newcommand{\RE}{\Re\mathrm{e}}
\newcommand{\IM}{\Im\mathrm{m}}
\newcommand{\id}{\mathrm{id}}
\newcommand{\Aut}{\mathrm{Aut}}
\newcommand{\Lin}{\mathrm{Lin}}
\newcommand{\Ln}{\mathrm{Ln}\,}
\newtheorem{theo}{Theorem}
\newenvironment{proof}{\noindent \textit{Proof:}\small }{\normalsize \hfill $\Box$ \\}
\newenvironment{example}{\noindent \textbf{Example :} \hrulefill \\ \small}{\hrule \vspace{0.4cm}}
\begin{document}

\title{Ergodic and synthetic Koopman analyses of cat maps onto classical 2-tori}

\author{David Viennot}
\affiliation{Universit\'e Marie \& Louis Pasteur, Institut UTINAM (CNRS UMR 6213), Observatoire de Besan\c con, 41bis Avenue de l'Observatoire, BP1615, 25010 Besan\c con cedex, France.}

\begin{abstract}
  We study classical continuous automorphisms of the torus (cat maps) from the viewpoint of the Koopman theory. We find analytical formulae for Koopman modes defined coherently on the whole of the torus, and their decompositions associated  with the partition of the torus into ergodic components. The spectrum of the Koopman operator is studied in four cases of cat maps: cyclic, quasi-cyclic, critical (transition from quasi-cyclic to chaotic behaviour) and chaotic. The synthetic spectrum associated with the ergodic decomposition is also studied.
\end{abstract}

\maketitle

\noindent{\it Keywords\/}: dynamical systems, classical chaos, Koopman operator, spectral analysis, synthetic spectra.

\section{Introduction}
Approaches based on eigenmodes are used in wave phenomena (classical electromagnetism, acoustics, fluid mechanics, etc). But this kind of approach seems inoperative for classical dynamical systems since classical flows are in general nonlinear. Nevertheless, Koopman and von Neumann have proposed another picture of the classical dynamics which is not based on phase space flows but on evolutions into the classical observable algebra \cite{Koopman1,Koopman2} (see also \cite{Lasota}). The ``evolution operator'' into the classical observable algebra, the so-called Koopman operator, is a linear operator permitting a spectral analysis of the dynamics. The properties of the Koopman operator depend on the chosen Banach space of phase space functions on which it is defined. In this paper we are interested by the case of the square integrable functions, because the spectral analysis is more consistent in the context of Hilbert spaces. By definition the spaces of observables of each ergodic components are stable by the Koopman operator. This one appears then as the ``synthesis'' of the operators for each ergodic space. In other words, the Koopman operator is ``block diagonal'' with respect to the ergodic decomposition, and is then the direct integral of its ergodic projections. An important point is that the nature of a spectral value (pure point or continuous) is not stable at the synthesis. Moreover an eigenmode defined by the ergodic decomposition has by construction a support restricted to an ergodic component, whereas the synthesis could permit to built eigenmodes coherently defined (for a physical meaning) on the whole of the phase space. With the ergodic decomposition, the eigenspaces are known from ``uncountable basis''. But these spaces are separable, we would like to be able to define countable basis having a physical meaning, such basis facilitating interpretations of the physical phenomena decomposed on it.\\

Interesting models of dynamical systems are the continuous automorphisms of the torus (cat maps). Following the Kolmogorov-Arnold-Moser (KAM) theorem \cite{Goldstein}, Hamiltonian systems present stable invariant tori in their phase spaces on which the dynamics is wrapped. A Poincar\'e section of that kind of dynamics becomes a discrete time dynamical system onto a torus of lower dimension. For example, a Hamiltonian system of two degrees of freedom obeying to the KAM theorem presents a stable 3-torus in the phase space $\mathbb R^4$ on which the dynamics is wrapped. The Poincar\'e section defined by $\theta^3=0$ (third angle of the 3-torus) is a 2-torus generated by all values of $(\theta^1,\theta^2,0)$. The dynamics is then represented by the Poincar\'e first recurrence map \cite{Lasota} onto this 2-torus. The model of a discrete time dynamics onto a torus is then relatively universal in classical dynamics. The cases of cat maps are the simplest non-trivial examples of this kind of systems presenting a large diversity of dynamical behaviours (cyclic, quasi-cyclic and chaotic).\\

Section II of this paper presents spectral analysis of the Koopman operators of cat maps. We present firstly an analysis based on the ergodic decomposition of the flow and of the Koopman operator. In a second step, we present an original construction of full Koopman modes of the ``synthetised operator'' which are coherently defined on the whole of the torus and which are directly related to the fixed and cyclic points of the system. This relation with special orbits provides an obvious interpretation of these eigenmodes. Section III studies in detail these full Koopman modes with respect to the dynamical behaviour of the cat map and presents examples for each one. For each cases, the synthetic and the ergodic decomposition spectra are studied.\\

{\it Some notations are used throughout this paper: $\Aut(M)$ denotes the set of automorphisms of a manifold $M$. $\mathcal L(\mathscr H)$ denotes the set of linear operators onto a Hilbert space $\mathscr H$ and $\mathcal U(\mathscr H)$ denotes the subset of unitary operators. $\Sp(A)$ denotes the spectrum of an operator $A$, $\Sp_{pp}(A)$ denotes its pure point spectrum and $\Sp_{cont}(A)$ its continuous spectrum. $A \simeq B$ (between two vector spaces) means that $A$ and $B$ are isomorphic. $U(1)$ denotes the unit circle in $\mathbb C$. $\Lin(S)$ denotes the linear envelope of the set $S$ (the set of all linear combinations of the elements of $S$)}

\section{Koopman analysis of classical cat maps}
\subsection{Cat maps and Koopman operators}
Let the phase space $\Gamma = \mathbb R^2/(2\pi\mathbb Z)^2 \simeq \mathbb S^1 \times \mathbb S^1$ be a 2-torus endowed with the uniform Borel measure $\mu$ such that $\mu(\Gamma)=1$ ($d\mu(\theta) = \frac{d\theta^1d\theta^2}{4\pi^2} = \frac{d^2\theta}{4\pi^2}$ with $\theta = (\theta^1,\theta^2)$ local coordinates onto $\Gamma$). A cat is a map $\varphi \in \Aut(\Gamma)$ defined by
\begin{equation}
  \varphi(\theta) = \Phi \theta \mod (2\pi,2\pi)
\end{equation}
where $\Phi \in SL(2,\mathbb R)$ is a $2\times 2$ real matrix with $\det \Phi=1$ and which is such that $\mu(\varphi(\Gamma))=1$. Strictly speaking, such a map is a cat if $\Phi \in SL(2,\mathbb Z)$ (in that case the condition $\mu(\varphi(\Gamma))=1$ is automatically ensured); but we enlarge the definition to obtain more dynamical behaviours. A cat map preserves the measure:
\begin{equation}
  \forall A \in \mathscr B(\Gamma), \quad \mu(\varphi(A)) = \mu(A)
\end{equation}
where $\mathscr B(\Gamma)$ is the Borel $\sigma$-algebra.\\
The nonlinearity of cat maps results from the modulo $(2\pi,2\pi)$. In practice, we can define a cat map more explicitly. We divide $\Gamma$ into several patches $\{D_i\}_i$ ($\bigcup_i D_i=\Gamma$, $D_i\cap D_j = \varnothing$ if $i\not=j$) such that $\exists m_i \in \mathbb Z^2$
\begin{equation}
  \varphi(\theta) = \Phi \theta + 2\pi m_i
\end{equation}
Let $i_\bullet : \Gamma \to \mathbb N$ be such that $\theta \in D_{i_\theta}$ ($\forall \theta \in \Gamma$).\\
Note that the condition $\mu(\varphi(\Gamma))=1$ induces that $\overline{\bigcup_i \varphi(D_i)} = \Gamma$ and $\mu(\varphi(D_i)\cap \varphi(D_j))=0$ if $i\not= j$.\\

Let $L^2(\Gamma,d\mu)$ be the algebra of classical square integrable observables of $\Gamma$. The Koopman operator $\mathcal K \in \mathcal L(L^2(\Gamma,d\mu))$ \cite{Koopman1,Koopman2,Lasota} is the evolution linear operator defined by:
\begin{equation}
  \forall f \in L^2(\Gamma,d\mu), \quad (\mathcal Kf)(\theta) = f(\varphi(\theta))
\end{equation}
The Koopman picture consists then to consider the evolution of the observables in place of the evolution of points of $\Gamma$ by $\varphi$. Since $\varphi$ is measure preserving, $\mathcal K \in \mathcal U(L^2(\Gamma,d\mu))$ is unitary.\\

We recall some known properties associated with $\Sp(\mathcal K)$:
\begin{itemize}
\item $\forall \varphi$, $1 \in \Sp_{pp}(\mathcal K)$ with at least the associated eigenvector $f_0(\theta)=1$ ($\forall \theta \in \Gamma)$.
\item $e^{\imath \lambda_1},e^{\imath \lambda_2} \in \Sp_{pp}(\mathcal K) \Rightarrow e^{\imath (\lambda_1+\lambda_2)} \in \Sp_{pp}(\mathcal K)$, an associated eigenvector being $f_{\lambda_1+\lambda_2} = f_{\lambda_1}f_{\lambda_2}$.
\item The flow $\varphi$ is ergodic if and only if the eigenvalues of $\mathcal K$ are simple and associated with unimodular eigenfunctions \cite{Eisner}. It follows that $\dim(1-\mathcal K)$ is the number of ergodic components of $\varphi$.
\item If there is no invariant subsets of $\Gamma$ of non-zero measure and if $\Sp(\mathcal K) \setminus \{1\}$ is absolutely continuous then the flow $\varphi$ is mixing \cite{Koopman2}.
\end{itemize}

Let $e^{\imath \lambda} \in \Sp(\mathcal K)$ be a spectral value of $\mathcal K$, we call Koopman eigenmode a function or a singular distribution $f_\lambda$ on $\Gamma$ such that
\begin{equation}\label{KoopEq}
  \mathcal K f_\lambda = e^{\imath \lambda} f_\lambda
\end{equation}
If $e^{\imath \lambda} \in \Sp_{pp}(\mathcal K)$, $f_\lambda \in L^2(\Gamma,d\mu)$ is the eigenvector (eigenfunction) associated with $e^{\imath \lambda}$. If $e^{\imath \lambda} \in \Sp_{cont}(\mathcal K)$, $f_\lambda \not\in L^2(\Gamma,d\mu)$ ($f_\lambda$ is a distribution or a non-normalisable function), and is the limit of a Weyl sequence in the weak topology \cite{Hislop}:\\
$\exists (f_{\lambda,n})_n \in (L^2(\Gamma,d\mu))^{\mathbb N}$, such that $\forall g \in L^2(\Gamma,d\mu)$, $(\langle f_{\lambda,n}|g\rangle)_n$ is a Cauchy sequence and $\lim_{n \to +\infty} \|(\mathcal K-e^{\imath \lambda})f_{\lambda,n}\|=0$; $\lim_{n \to +\infty} |\langle f_{n,\lambda}|g\rangle - \langle f_\lambda|g \rangle| = 0$, the first $\langle \bullet|\bullet \rangle$ being the inner product of $L^2(\Gamma,d\mu)$, the second one being the distribution bracket.\\
Remark about the used terminology: in order to avoid any confusion, we use the term ``Koopman eigenfunction'' only for functions of $L^2(\Gamma,d\mu)$ associated with pure point spectrum. To denote any solution of eq.(\ref{KoopEq}), including functions associated with the continuous spectrum (which are not $L^2$), we prefer to use the term ``Koopman mode'' with reference to Fourier modes for example. Strictly speaking \cite{Mezic},  for any function $g$ in the continuous spectrum eigenspace, we can write $g(\theta) = \int_{\Sp_{cont}(\mathcal K)} \langle f_\lambda|g \rangle f_\lambda(\theta) d\nu(\lambda)$ (for a suitable spectral measure onto $\Sp_{cont}(\mathcal K)$); the Koopman mode of $g$ associated with $\lambda$ is the projection of $g$: $\langle f_\lambda|g\rangle f_\lambda(\theta)$. So $f_\lambda$ can be viewed as a fundamental Koopman mode in this meaning. We forget the adjective ``fundamental'' to lighten the writing.\\ 

The goal of this section is the construction of these Koopman modes. The meaning of the Koopman modes are obvious, they are the (generalised) observables stable with respect to the flow. By ``stable'' we mean that up to a phase change (constant with $\theta$), the Koopman modes are stationary observables. The ergodic components of the flow $\varphi$ are by definition invariant submanifolds of $\Gamma$. We can then think that the Koopman modes are related to these ones.

\subsection{Ergodic decomposition and synthetic spectrum}
\subsubsection{Ergodic decomposition:}
Let $\theta_a \in \Gamma$ and $\Orb(\theta_a) = \{\varphi^n(\theta_a)\}_{n \in \mathbb Z}$ be its orbit by $\varphi$. By construction $\Gamma_a = \overline{\Orb(\theta_a)}$ (where the upperline stands for the topological closure with respect to $\mu$) is an ergodic component of the flow: this one travels the whole of $\Gamma_a$ and the orbits starting from points not belonging to $\Gamma_a$ do not cross $\Gamma_a$. We can then consider the decomposition of $\Gamma$ into ergodic components: it exists $\Omega \in \mathscr B(\Gamma)$ endowed with local coordinates $a$, such that $\{\Gamma_a = \overline{\Orb(\theta_a)}\}_{a \in \Omega}$ is a set of independent ergodic components covering $\Gamma$:
\begin{equation}
  \bigcup_{a \in \Omega} \Gamma_a = \Gamma \qquad \mu(\Gamma_a \cap \Gamma_b) = 0 \quad (\forall a\not=b)
\end{equation}
The choice of $\Omega$ is not unique; $\Omega$ is constituted by the choice of one point in each ergodic component $\Gamma_a$. If the ergodic components are discrete, $\Omega$ is 2 dimensional; if it is not the case, $\Omega$ is in general 1 dimensional, except if $\varphi$ is ergodic on the whole of $\Gamma$ where $\Omega$ is reduced to a single point. As far as possible, we choose the points constituting $\Omega$ such that this one be connected.\\
From the viewpoint of the Koopman picture, it exists a Borel measure $\nu$ onto $\Omega$, and a set mutually singular measures $\{\mu_a\}_{a\in \Omega}$ such that
\begin{equation}
  \int_\Omega \mu_a d\nu(a) = \mu \qquad \mu_a(\Gamma_b)=0 \quad (\forall a\not=b)
\end{equation}
The Hilbert space of observables is then decomposed as a fibre direct integral \cite{Reed, Chow} (equivalent to a direct sum of complementary subspaces but where the set of subspaces is uncountable) :
\begin{equation}
  L^2(\Gamma,d\mu) = \int_\Omega^\oplus L^2(\Gamma_a,d\mu_a) d\nu(a)
\end{equation}
If $\Gamma_a$ is discrete, $\mu_a$ is a discrete measure. By definition of an ergodic component, $\mathcal K$ is direct integral ``block diagonal'' (it induces no coupling between two ergodic components):
\begin{equation}\label{fibreint}
  \mathcal K = \int_\Omega^\oplus \left. \mathcal K \right|_{\Gamma_a} d\nu(a)
\end{equation}
where $\left. \mathcal K \right|_{\Gamma_a}$ denotes the Koopman operator of the flow $\varphi$ restricted on $\Gamma_a$, which is also the projection of $\mathcal K$ onto the Hilbert subspace $L^2(\Gamma_a,d\mu_a)$:
\begin{equation}
  \left. \mathcal K \right|_{\Gamma_a} = \mathcal K \mathcal P_a = \mathcal P_a \mathcal K = \mathcal P_a \mathcal K \mathcal P_a
\end{equation}
where $\mathcal P_a$ is the orthogonal projection from $L^2(\Gamma,d\mu)$ to $L^2(\Gamma_a,d\mu_a)$. Eq.(\ref{fibreint}) can be also interpreted as follows: $\left. \mathcal K \right|_{\Gamma_a}$ are the independent Koopman operators of the independent flows of the ergodic components, whereas $\mathcal K$ is the ``synthetised'' operator obtained by sum of them.

\subsubsection{Koopman modes on a single ergodic component:}
We consider the spectral analysis on a single ergodic component $\Gamma_a$.
\begin{theo} \label{ThKoopModeErgo}
  If $\theta_a$ is $N$-cyclic ($N \in \mathbb N^*$), i.e. $\varphi^N(\theta_a)=\theta_a \iff \Gamma_a = \Orb(\theta_a) = \{\varphi^n(\theta_a)\}_{n\in\{0,...,N-1\}}$, then $\Sp(\left. \mathcal K \right|_{\Gamma_a}) = \Sp_{pp} (\left. \mathcal K \right|_{\Gamma_a}) = \{e^{\imath \frac{2\pi}{N} p}\}_{p\in\{0,...,N-1\}}$ and the associated eigenvectors are
  \begin{equation}
  f_p = \sum_{n=0}^{N-1} e^{\imath \frac{2\pi}{N} np} \delta_{\varphi^n(\theta_a)}
\end{equation}
  where $f_p \equiv f_{\frac{2\pi}{N} p}$ and $\delta_{\varphi^n(\theta_a)} (\theta) = \delta(\theta-\varphi^n(\theta_a))$. In that case $\mu_a$ is discrete and $L^2(\Gamma_a,d\mu_a) \simeq \mathbb C^N$ via the canonical basis $(\delta_{\varphi^n(\theta_a)})_{n\in\{0,...,N-1\}}$.\\

  If $\theta_a$ is not cyclic then $\Sp(\left. \mathcal K \right|_{\Gamma_a}) = U(1)$ and the Koopman modes are
  \begin{equation}
    f_\lambda = \sum_{n \in \mathbb Z} e^{\imath \lambda n} \delta_{\varphi^n(\theta_a)}
  \end{equation}
\end{theo}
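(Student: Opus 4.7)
The plan is to exploit the fact that, for any $\theta_b \in \Gamma$, the measure-preservation of $\varphi$ (with $|\det \Phi| = 1$) yields
\[
(\mathcal{K}\delta_{\theta_b})(\theta) = \delta_{\theta_b}(\varphi(\theta)) = \delta_{\varphi^{-1}(\theta_b)}(\theta),
\]
so that $\mathcal{K}$ acts on a Dirac distribution located at an orbit point as a backward shift along the orbit: $\mathcal{K}\delta_{\varphi^n(\theta_a)} = \delta_{\varphi^{n-1}(\theta_a)}$. Both formulae of the theorem then appear as discrete Fourier transforms of a cyclic (respectively bilateral) shift, and the eigenvalue equation $\mathcal{K}f = e^{\imath\lambda}f$ reduces to a reindexation of the sum.

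For the cyclic part, I would first note that $N$-cyclicity (with $N$ the minimal period) forces $\Gamma_a$ to have exactly $N$ pairwise distinct points, so that $\mu_a = \frac{1}{N}\sum_{n=0}^{N-1}\delta_{\varphi^n(\theta_a)}$ and $L^2(\Gamma_a, d\mu_a) \simeq \mathbb{C}^N$ with canonical basis $(\delta_{\varphi^n(\theta_a)})_{n=0,\ldots,N-1}$. In this basis, $\mathcal{K}|_{\Gamma_a}$ is the standard $N$-cyclic backward shift matrix, whose eigenpairs are furnished by the discrete Fourier transform. The equation $\mathcal{K}f_p = e^{2\imath\pi p/N}f_p$ is then verified by substitution: reindexing $m = n-1$ modulo $N$ in $\sum_n e^{2\imath\pi np/N}\delta_{\varphi^{n-1}(\theta_a)}$ and using the key identity $\varphi^{-1}(\theta_a) = \varphi^{N-1}(\theta_a)$ provided by cyclicity collapses the shifted sum back onto $f_p$ with the overall phase $e^{2\imath\pi p/N}$. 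Completeness is a pure dimension count: $N$ distinct eigenvalues in an $N$-dimensional Hilbert space exhaust the whole spectrum, and there is no continuous part.

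For the non-cyclic part, the same shift-and-reindex computation is still formally valid and yields $\mathcal{K}f_\lambda = e^{\imath\lambda}f_\lambda$; since the orbit is now infinite, $f_\lambda \notin L^2(\Gamma_a, d\mu_a)$ and the identity is read as a distributional one. To upgrade this into the spectral statement $\Sp(\mathcal{K}|_{\Gamma_a}) = U(1)$, I would build an explicit Weyl sequence for every $\lambda \in \mathbb{R}$. A convenient choice is the Cesàro average
\[
f_{\lambda,K} = \frac{1}{\sqrt{K}} \sum_{n=0}^{K-1} e^{-\imath\lambda n}\mathcal{K}^n g_K,
\]
where $g_K \in L^2(\Gamma_a, d\mu_a)$ is a unit-norm bump supported in a neighbourhood of $\theta_a$ small enough that the first $K$ iterates $\mathcal{K}^n g_K$ have pairwise disjoint supports (such $g_K$ exists because the orbit is infinite and has a positive minimal pairwise distance on any finite window). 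Then $\|f_{\lambda,K}\|=1$, and a telescoping computation gives
\[
\|(\mathcal{K}-e^{\imath\lambda})f_{\lambda,K}\| = \frac{1}{\sqrt{K}}\bigl\|e^{-\imath\lambda(K-1)}\mathcal{K}^K g_K - e^{\imath\lambda} g_K\bigr\| \leq \frac{2}{\sqrt{K}} \to 0,
\]
so that $e^{\imath\lambda} \in \Sp(\mathcal{K}|_{\Gamma_a})$. The main obstacle of the plan lies in the final identification: showing that, after the widths of the $g_K$ are sent to zero, the weak limit of the $f_{\lambda,K}$ coincides (up to an overall normalisation) with the distribution $f_\lambda = \sum_{n \in \mathbb{Z}} e^{\imath\lambda n}\delta_{\varphi^n(\theta_a)}$ advertised in the theorem, rather than with the trivial zero limit. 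The rest of the argument is combinatorial and follows directly from the shift identity $\mathcal{K}\delta_{\varphi^n(\theta_a)} = \delta_{\varphi^{n-1}(\theta_a)}$.
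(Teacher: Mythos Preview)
Your proposal is correct and, for the cyclic part, essentially coincides with the paper's argument: the paper also computes $\mathcal K f_p$ by applying $\varphi$ inside each Dirac, reindexes $n\mapsto n+1$, and uses $\varphi^{-1}(\theta_a)=\varphi^{N-1}(\theta_a)$ together with $e^{\imath \frac{2\pi}{N}Np}=1$ to close the sum; your dimension count for completeness is slightly more explicit than the paper's one-line remark that non-$N$th roots of unity are excluded.

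For the non-cyclic part you actually do more than the paper. The paper merely performs the same shift-and-reindex on the bilateral sum, obtains $\mathcal K f_\lambda = e^{\imath\lambda} f_\lambda$, and concludes with ``no restriction occurs concerning the Koopman value $\lambda$'', leaving the identification $\Sp(\left.\mathcal K\right|_{\Gamma_a})=U(1)$ at the formal level. Your explicit Weyl-sequence construction with bump functions $g_K$ and the telescoping bound $\|(\mathcal K-e^{\imath\lambda})f_{\lambda,K}\|\leq 2/\sqrt K$ supplies the missing spectral-theoretic justification that every $e^{\imath\lambda}$ is a genuine spectral value. The obstacle you flag (matching the weak limit of $f_{\lambda,K}$ with the distribution $f_\lambda$) is real, but the paper does not address it either; it simply exhibits $f_\lambda$ as a formal solution and relies on the general Weyl-sequence interpretation stated earlier in the text. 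So your route is not different in spirit, only more rigorous on the spectral side.
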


\begin{proof}
  Let $\Gamma_a = \Orb(\theta_a) = \{\varphi^n(\theta_a)\}_{n\in\{0,...,N-1\}}$ be the orbit of a $N$-cyclic point.
\begin{eqnarray}
  \left. \mathcal K \right|_{\Gamma_a} f_p(\theta) & = & f_p(\varphi(\theta)) \\
  & = & \sum_{n=0}^{N-1} e^{\imath \frac{2\pi}{N} np} \delta_{\varphi^{n-1}(\theta_a)}(\theta) \\
  & = & \sum_{n=-1}^{N-2} e^{\imath \frac{2\pi}{N} (n+1)p} \delta_{\varphi^n(\theta_a)}(\theta) \\
  & = & \sum_{n=0}^{N-2} e^{\imath \frac{2\pi}{N} (n+1)p} \delta_{\varphi^n(\theta_a)}(\theta) + \delta_{\varphi^{N-1}(\theta_a)}(\theta) \\
  & = & \sum_{n=0}^{N-2} e^{\imath \frac{2\pi}{N} (n+1)p} \delta_{\varphi^n(\theta_a)}(\theta) + e^{\imath \frac{2\pi}{N} Np}\delta_{\varphi^{N-1}(\theta_a)}(\theta) \\
  & = & e^{\imath \frac{2\pi}{N} p} f_p(\theta)
\end{eqnarray}
We see that if $\lambda$ is not a $N$-th root of unity, it cannot be a Koopman value of $\left. \mathcal K \right|_{\Gamma_a}$.\\

Let $\Gamma_a = \overline{\Orb(\theta_a)}$ where $\theta_a$ is not cyclic.
\begin{eqnarray}
  \left. \mathcal K \right|_{\Gamma_a} f_\lambda(\theta) & = & f_\lambda(\varphi(\theta)) \\
  & = & \sum_{n=-\infty}^{+\infty} e^{\imath \lambda n} \delta_{\varphi^{n-1}(\theta_a)}(\theta) \\
  & = & \sum_{n=-\infty}^{+\infty} e^{\imath \lambda (n+1)} \delta_{\varphi^n(\theta_a)}(\theta) \\
  & = & e^{\imath \lambda} f_p(\theta)
\end{eqnarray}
No restriction occurs concerning the Koopman value $\lambda$.
\end{proof}

In the not cyclic case, the nature of $e^{\imath \lambda}$ (eigenvalue of the pure point spectrum or spectral value of the continuum) depends on the dynamical behaviour of $\varphi$. We will treat this question in section III.

\subsubsection{Synthetic spectrum:}
$\mathcal K$ being normal (since it is unitary) we have $\Sp(\mathcal K) = \overline{\bigcup_{a \in \Omega} \Sp(\left. \mathcal K \right|_{\Gamma_a})}$ (see \cite{Cho}). Let $e^{\imath \lambda} \in \bigcup_{a \in \Omega} \Sp(\left. \mathcal K \right|_{\Gamma_a})$ and $\Omega_\lambda \in \mathscr B(\Omega)$ be its support, i.e. $\Omega_\lambda = \{a \in \Omega, \text{ s.t. } e^{\imath \lambda} \in \Sp(\left. \mathcal K \right|_{\Gamma_a})\}$. $\Sp_{synth}(\mathcal K) = \{e^{\imath \lambda} \in \bigcup_{a \in \Omega} \Sp(\left. \mathcal K \right|_{\Gamma_a}), \text{ such that } \nu(\Omega_\lambda) \not= 0\}$ is called the synthetic spectrum of $\mathcal K$ \cite{Azoff}. We see easily that:
\begin{itemize}
\item If $e^{\imath \lambda}$ is a synthetic pure point eigenvalue ($\nu(\Omega_\lambda) \not= 0$ and $e^{\imath \lambda} \in \Sp_{pp}(\left. \mathcal K \right|_{\Gamma_a})$ $\forall a \in \Omega_\lambda$), then $e^{\imath \lambda} \in \Sp_{pp}(\mathcal K)$.
\item If $e^{\imath \lambda}$ is a non-synthetic pure point eigenvalue ($\nu(\Omega_\lambda) = 0$ and $e^{\imath \lambda} \in \Sp_{pp}(\left. \mathcal K \right|_{\Gamma_a})$ $\forall a \in \Omega_\lambda$), then $e^{\imath \lambda} \in \Sp_{cont}(\mathcal K)$.
\item If $e^{\imath \lambda}$ is a non-synthetic spectral value of the continuum ($\nu(\Omega_\lambda) = 0$ and $e^{\imath \lambda} \in \Sp_{cont}(\left. \mathcal K \right|_{\Gamma_a})$ $\forall a \in \Omega_\lambda$), then $e^{\imath \lambda} \in \Sp_{cont}(\mathcal K)$.
\end{itemize}

%We call ``almost point spectrum'', the set $\Sp_{ap}(\mathcal K)$ of non-synthetic pure point eigenvalues which are spectral values of the continuum for the whole of $\Gamma$ but are pure point eigenvalue for the ergodic components.\\
If $e^{\imath \lambda}$ is a synthetic spectral value of the continuum, then $\Ran(e^{\imath \lambda}-\left. \mathcal K \right|_{\Gamma_a}) \not= L^2(\Gamma_a,d\mu_a)$ and $\ker(e^{\imath \lambda}-\left. \mathcal K \right|_{\Gamma_a}) = \{0\}$. But nothing ensures that $\ker(e^{\imath \lambda} - \mathcal K)=\{0\}$. Indeed, we can have $(e^{\imath \lambda} - \left. \mathcal K \right|_{\Gamma_a}) f_{\lambda a} \not=0$ for $\nu$-almost all $a \in \Omega_\lambda$, but with $\int_{\Omega_\lambda} (e^{\imath \lambda} - \left. \mathcal K \right|_{\Gamma_a}) f_{\lambda a} d\nu(a) = 0$, for example if $(e^{\imath \lambda} - \left. \mathcal K \right|_{\Gamma_a}) f_{\lambda a}= \cos(a)$, $\Omega_\lambda = [0,2\pi]$ and $d\nu(a)=da$. In this case $f_{\lambda a} \not\in \ker(e^{\imath \lambda}-\left. \mathcal K \right|_{\Gamma_a})$ but $\int_{\Omega}^\oplus f_{\lambda a} d\nu(a) \in \ker(e^{\imath \lambda} - \mathcal K)$. We have then no conclusion, $\lambda$ can be a pure point eigenvalue of $\mathcal K$ or a spectral value of the continuum of $\mathcal K$.\\
To summarize, the synthesis of the ergodic projections $\int_\Omega^\oplus \left. \mathcal K \right|_{\Gamma_a} d\mu_a$ tends to weaken the nature of the spectral values (from pure point to continuous nature). But by definition, the synthetic values carry significant weight in the synthesis. It follows that the synthetic pure point eigenvalues remain pure point. Moreover, ``synthesis resonances'' can transform some synthetic spectral values of the continuum to pure point eigenvalues.\\

$\forall e^{\imath \lambda} \in \bigcup_{a \in \Omega} \Sp(\left. \mathcal K \right|_{\Gamma_a})$ we can define Koopman modes on the whole of $\Gamma$ by example as
\begin{equation}
  f_\lambda = \int_{\Omega_\lambda}^\oplus \sum_{i=1}^{d_{\lambda a}} c_{ia} f_{\lambda i a} d\nu(a)
\end{equation}
where $i$ is a possible degeneracy index of $e^{\imath \lambda}$ in $\Sp(\left. \mathcal K \right|_{\Gamma_a})$ and $c_{ia} \in \mathbb C$. By construction $f_\lambda(\theta)=0$, $\forall \theta \not\in \bigcup_{a \in \Omega_\lambda} \Gamma_a$. If $e^{\imath \lambda}$ is a synthetic pure point eigenvalue, we must choose $a \mapsto c_{ia}$ such that $\int_{\Omega_\lambda} \sum_i |c_{ia}|^2 d\nu(a) < +\infty$ in order to define an associated eigenvector. But in the case where $e^{\imath \lambda}$ is a synthetic spectral value of the continuum becoming a pure point eigenvalue of $\mathcal K$, it is not clear how choose $c_{ia}$ in order to $f_\lambda \in L^2(\Gamma,d\mu)$. Moreover, in all cases, such Koopman modes defined with arbitrary $c_{ia}$ are just superpositions of the Koopman modes associated with the ergodic components without other meaning. We might want to define Koopman on the whole of $\Gamma$ characterising global properties of the flow. Moreover the eigenspace $\Lin(f_{\lambda i a})_{i \in \{1,...,d_{\lambda a}\}, a \in \Omega_\lambda}$ is known via a non-$L^2$ ``uncountable basis'' $(f_{\lambda i a})_{i \in \{1,...,d_{\lambda a}\}, a \in \Omega_\lambda}$ of functions/distributions having a $\mu$-null support $\Gamma_a$ ($\mu(\Gamma_a)=0$ except if $\Omega$ is countable). But for a pure point eigenvalue the eigensubspace $\ker(e^{\imath \lambda}-\mathcal K) \subset L^2(\Gamma,d\mu)$ is separable, it has $L^2$ countable basis of functions of non-null supports. Note that if $\lambda \in \Sp_{cont}(\mathcal K)$, the generalised eigenspace $\Lin(f_{\lambda i a})_{i,a} \not\subset L^2(\Gamma,d\mu)$ is not $\ker(e^{\imath \lambda}-\mathcal K)$ (the kernel being defined as a subspace of the Hilbert space $L^2(\Gamma,d\mu)$). Finally we want to find Koopman modes defined coherently on the whole of $\Gamma$ characterizing global behaviour of the flow and forming countable basis of the eigenspaces at least for the pure point eigenvalues. This is the subject of the next subsection.

\subsection{Full Koopman modes}
For a continuous time periodic dynamical system (harmonic oscillator), the Koopman modes reduce to the Fourier modes. We can think the Koopman analysis as being a kind of generalisation of the Fourier analysis, and so the Koopman modes are related to the (quasi)cyclicality of the flow. As ``stable'' observables, it is natural that the Koopman modes reflect the (quasi)-cyclic properties of the system.\\

The Koopman modes are then strongly related to fixed or cyclic points of the cat maps, and then to the spectral analysis of $\Phi$. Let $\Sp(\Phi) = \{e^{\pm \imath \omega}\}$ be the spectrum of $\Phi$ (supposed diagonalisable, the non-diagonalisable case will be treated section III) and $e_\pm$ and $e^\pm$ be the associated right and left eigenvectors
\begin{equation}
  {\Phi^\mu}_\nu e^\nu_\pm = e^{\pm \imath \omega} e^\mu_\pm \qquad e^\pm_\mu {\Phi^\mu}_\nu = e^{\pm \imath \omega} e^\pm_\nu
\end{equation}
and let $\theta^\pm = e^\pm_\mu \theta^\mu$ (where $e^a_\mu e^\mu_b = \delta^a_b$) be the complex variables in the eigendirections (since $\Phi$ is not necessarily symmetric, the eigenvalues and eigenvectors can be complex). In general we must then consider the complexification of the torus $\Gamma_{\mathbb C}$ (the complex manifold generated by $\theta^\pm$).\\
In the patch $D_i$, the fixed points satisfies:
\begin{eqnarray}
  \phi_i^\pm & = & \varphi(\phi_i^\pm) \\
  & = & e_\mu^\pm ({\Phi^\mu}_\nu \phi_{i}^\nu + 2\pi m_{i}^\mu) \\
  & = & e^{\pm \imath \omega} \phi_{i}^\pm + 2\pi m_{i}^\pm
\end{eqnarray}
It follows that the complex fixed points of $\varphi$ are (in addition to the trivial fixed point $0$)
\begin{equation}
  \phi_{i}^\pm = \frac{2\pi e^\pm_\mu m^\mu_{i}}{1-e^{\pm \imath \omega}}
\end{equation}
For the $N$-cyclic points, the situation is more difficult since we must follow all possible changes of patches during the cycle. $\mu(\Gamma)=1$ and $\varphi$ being measure-preserving, the Poincar\'e recurrence theorem \cite{Lasota} ensures that the orbit of $\mu$-almost all $\theta$ returns in the neighbourhood of $\theta$ after a certain number of iterations. This induces that $\exists N \in \mathbb N^*$ such that $i_{\varphi^N(\theta)} = i_{\theta}$ (the orbit starting in $D_{i_{\theta}}$ returns necessarily into $D_{i_{\theta}}$ for almost all $\theta$). We can use points $\theta$ of $D_{i_\theta}$ to follow the changes of patches, by searching associated cyclic points $\phi_{i_\theta}$. We have then $\phi_{i_{\varphi(\theta)}} = \varphi(\phi_{i_\theta}) \Rightarrow \phi_{i_{\varphi^2(\theta)}} = \varphi(\phi_{i_{\varphi(\theta)}}) = \varphi^2(\phi_{i_\theta}) \Rightarrow ... \Rightarrow \phi_{i_{\varphi^N(\theta)}} = \varphi^N(\phi_{i_\theta})$. Since $i_{\varphi^N(\theta)} = i_\theta$ we must have $\varphi^N(\phi_{i_\theta}) = \phi_{i_\theta}$, $\phi_{i_\theta}$ is well a $N$-cyclic complex point of $\varphi$. This one is then defined by
\begin{eqnarray}
  \phi_{i_\theta}^\pm  & = & \varphi^N(\phi_{i_\theta})^\pm \\
  & = & e^\pm_\mu ({\Phi^\mu}_{\nu_1}({\Phi^{\nu_1}}_{\nu_2}(...({\Phi^{\nu_{N-1}}}_{\nu_N} \phi^{\nu_N}_{i_\theta} \nonumber \\
  & & \qquad +2\pi m^{\nu_{N-1}}_{i_\theta})+...)+2\pi m^{\nu_1}_{i_{\varphi^{N-2}(\theta)}})+2\pi m^\mu_{i_{\varphi^{N-1}(\theta)}}) \\
  & = & e^{\pm \imath N\omega} \phi^\pm_{i_\theta} + 2\pi e^{\pm \imath (N-1)\omega} m^\pm_{i_\theta}+... \nonumber \\
  & & \quad ... +2 \pi e^{\pm \imath \omega} m^\pm_{i_{\varphi^{N-2}(\theta)}} + 2\pi m^\pm_{i_{\varphi^{N-1}(\theta)}}
\end{eqnarray}
It follows that the complex $N$-cyclic points of $\varphi$ are (if $N\omega \not \in 2\pi \mathbb Z$)
\begin{equation}
  \phi_{i_\theta}^\pm = \frac{2\pi e^\pm_\mu \sum_{q=0}^{N-1} e^{\pm \imath (N-1-q)\omega} m^\mu_{i_{\varphi^q(\theta)}}}{1-e^{\pm \imath N \omega}}
\end{equation}
Note that we have by construction
\begin{equation}
  \varphi(\theta)^\pm - \phi^\pm_{i_{\varphi(\theta)}} = e^{\pm \imath \omega}(\theta^\pm - \phi^\pm_{i_{\theta}})
 \end{equation}

If $\omega \in \imath \mathbb R$ these fixed and cyclic points are obviously real and are then true fixed and cyclic points of the cat map onto $\Gamma$. If $\omega \in \mathbb R$, $\phi^+_{i_\theta} = \overline{\phi^-_{i_\theta}}$ and then $\phi_{i_\theta} \in \Gamma$ solution of $e^+_\mu \phi_{i_\theta}^\mu = \phi_{i_\theta}^+$ are also true fixed or cyclic points.\\

We can now built Koopman modes defined on the whole of $\Gamma$ and related to the (complex) fixed and cyclic points of $\varphi$:
\begin{theo}\label{ThFullKoopMode}
  If $\omega \in \mathbb R$, then $\Sp(\mathcal K) = \overline{\{e^{\imath n \omega}\}_{n \in \mathbb Z}}$ and the associated Koopman modes are 
  \begin{equation}\label{solcat}
    f_{nml}(\theta) = |\varphi^l(\theta)^+ - \phi_{i_{\varphi^l(\theta)}}^+|^m e^{\imath n \arg(\varphi^l(\theta)^+ - \phi_{i_{\varphi^l(\theta)}}^+)}
  \end{equation}
  with $e^{\imath \lambda_n}= e^{\imath n \omega}$ generally of infinite multiplicity with $l,m \in \mathbb N$ (indexes of degeneracy). $\phi_{i_\theta}^+$ is the complex representation of a $N_\theta$-cyclic point (fixed point if $N_\theta=1$) of $\varphi$ associated with the sequence $\{m_{i_{\varphi^q(\theta)}}\}_{q \in \{0,...,N_\theta\}}$ , $N_\theta \in \mathbb N^*$ being the time of the first return of the orbit of $\theta$ in $D_{i_\theta}$.\\

  If $\omega \in \imath \mathbb R$, then $\Sp(\mathcal K) = U(1)$ and the Koopman modes associated with $e^{\imath \lambda}$ are
   \begin{equation}
    f_{\lambda\alpha l}(\theta)  =  e^{-\imath \frac{\lambda}{\IM(\omega)} \left(\alpha \ln (\varphi^l(\theta)^+-\phi^+_{i_{\varphi^l(\theta)}}) - (1-\alpha) \ln (\varphi^l(\theta)^--\phi^-_{i_{\varphi^l(\theta)}})\right)}
   \end{equation}
   with $\alpha \in \mathbb R$ and $l \in \mathbb N$.
\end{theo}

\begin{proof}
To solve the eigenequation of $\mathcal K$, we search the eigenfunctions as $f_\lambda(\theta) = e^{\pm \frac{\lambda}{\omega}F(e^\pm_\mu \theta^\mu - \phi_{\psi(\theta)}^\pm)}$, i.e. as functions of the complexified coordinates $\theta^\pm$. This is just a general parametrisation, the function $F \in \mathbb C^{\mathbb C}$, the complex shift $\phi^\pm_\theta$ and the measure-preserving automorphism $\psi \in \Aut(\Gamma)$ being undetermined this choice does not induce any loss of generality. Firstly, we search the solutions such that $\psi = \id_\Gamma$.
\begin{eqnarray}
  & & \mathcal K f_\lambda = e^{\imath \lambda} f_\lambda \nonumber \\
  & \iff & e^{\pm \frac{\lambda}{\omega}F(e^\pm_\mu \varphi(\theta)^\mu - \phi_{\varphi(\theta)}^\pm)} = e^{\imath \lambda} e^{\pm \frac{\lambda}{\omega}F(e^\pm_\mu \theta^\mu - \phi_\theta^\pm)} \\
  & \iff & F(e^\pm_\mu \varphi(\theta)^\mu - \phi_{\varphi(\theta)}^\pm) = F(e^\pm_\mu \theta^\mu - \phi_\theta^\pm) \pm \imath \omega \\
  & \iff & F(e^\pm_\mu ({\Phi^\mu}_\nu \theta^\nu + 2\pi m^\nu_{i_\theta}) -\phi_{\varphi(\theta)}^\pm) = F(e^\pm_\mu \theta^\mu - \phi_\theta^\pm) \pm \imath \omega \\
  & \iff & F(e^{\pm \imath \omega} \theta^\pm + 2\pi m^\pm_{i_\theta} - \phi_{\varphi(\theta)}^\pm) = F(\theta^\pm - \phi_\theta^\pm) \pm \imath \omega \\
  & \iff & F(e^{\pm \imath \omega} (\theta^\pm + e^{\mp \imath \omega}(2\pi m^\pm_{i_\theta} - \phi_{\varphi(\theta)}^\pm)))  = F(\theta^\pm - \phi_\theta^\pm) \pm \imath \omega \\
  & \iff & \left\{\begin{array}{c} F(z) = \Ln(z)+C = \ln|z|+\imath \arg(z) + C \\ 2\pi m^\pm_{i_\theta} - \phi_{\varphi(\theta)}^\pm = - e^{\pm \imath \omega} \phi_\theta^\pm \end{array} \right.
\end{eqnarray}
with $C$ a constant. The equation governing the shift $\phi^\pm_\theta$ depending on $\theta$ only from the index $i_\theta$, from this point we write $\phi^\pm_{i_\theta}$. We can rewrite this equation as
\begin{eqnarray}
  \phi_{i_{\varphi(\theta)}}^\pm & = & e^{\pm \imath \omega} \phi_{i_\theta}^\pm + 2\pi m_{i_\theta}^\pm \\
  & = & e_\mu^\pm ({\Phi^\mu}_\nu \phi_{i_\theta}^\nu + 2\pi m_{i_\theta}^\mu) \\
  & = & \varphi(\phi_{i_{\theta}})^\pm
\end{eqnarray}

This is the equation defining a complex fixed or cyclic point of $\varphi$.

\begin{equation}
  \varphi(\theta)^\pm - \phi^\pm_{i_{\varphi(\theta)}} = e^{\pm \imath \omega}(\theta^\pm - \phi^\pm_{i_{\theta}})
 \end{equation}
The nature of $\phi_{i_\theta}$ depending on the time $N_\theta$ of the first return of $\theta$ in $D_{i_{\theta}}$.\\
Now we consider the case of a non-trivial $\psi \in \Aut(\Gamma)$ defined by $\psi(\theta) = \Psi \theta + 2\pi \ell_{i_\theta}$ with $\Psi \in SL(2,\mathbb R)$ and $\ell_{i_\theta} \in \mathbb Z^2$. We have then
\begin{equation}
  e^\pm_\mu \psi(\varphi(\theta))^\mu  = e^\pm_\mu({\Psi^\mu}_\nu({\Phi^\mu}_\rho \theta^\rho+2\pi m^\rho_{i_\theta})+2\pi \ell^\mu_{i_{\varphi(\theta)}})
\end{equation}
To find anew an eigenequation with $e^{\pm \imath \omega}$ it is necessary that $e^\pm_\mu$ be also the left eigenvectors of $\Psi$, implying the following condition on $\psi$: $[\Psi,\Phi]=0$. By denoting $e^{\pm \imath \varpi}$ the eigenvalues of $\Psi$ we have:
\begin{eqnarray}
  e^\pm_\mu \psi(\varphi(\theta))^\mu  & = & e^{\pm \imath \varpi} e^{\pm \imath \omega} \theta^\pm + 2\pi e^{\pm \imath \varpi} m^\pm_{i_\theta} + 2\pi \ell^\pm_{i_{\varphi(\theta)}} \\
  & = & e^{\pm \imath \omega}(e^\pm_\mu {\Psi^\mu}_\nu \theta^\nu + 2\pi \ell^\pm_{i_\theta}) \nonumber \\
  & & \quad - 2\pi e^{\pm \imath \omega} \ell^\pm_{i_\theta} + 2\pi e^{\pm \imath \varpi} m^\pm_{i_\theta} + 2\pi \ell^\pm_{i_{\varphi(\theta)}}
\end{eqnarray}
It follows that
\begin{equation}
  \psi(\varphi(\theta))^\pm - \phi^\pm_{i_{\psi \circ \varphi(\theta)}} = e^{\pm \imath \omega}(\psi(\theta)^\pm - \phi^\pm_{i_{\psi(\theta)}})
\end{equation}
if and only if
\begin{eqnarray}
  \phi^\pm_{i_{\psi \circ \varphi(\theta)}} & = & e^{\pm \imath \omega} \phi^\pm_{i_{\psi(\theta)}} - 2\pi e^{\pm \imath \omega} \ell^\pm_{i_\theta} + 2\pi e^{\pm \imath \varpi} m^\pm_{i_\theta} + 2\pi \ell^\pm_{i_{\varphi(\theta)}} \\
  \phi_{i_{\psi \circ \varphi(\theta)}} & = & \varphi(\phi_{i_{\psi(\theta)}}) - 2\pi m_{i_{\psi(\theta)}} -2\pi \Phi \ell_{i_\theta} + 2\pi \Psi m_{i_\theta} + 2\pi \ell_{i_{\varphi(\theta)}} \\
    & = & \phi_{i_{\varphi \circ \psi(\theta)}} + 2\pi(\Psi m_{i_\theta}-m_{i_{\psi(\theta)}} - \Phi \ell_{i_\theta} + \ell_{i_{\varphi(\theta)}})
\end{eqnarray}
Together with the constraint $[\Phi,\Psi]=0$, this condition is satisfied only if $\psi = \varphi^l$ for $l \in \mathbb Z$. Indeed, $[\Phi,\Psi]=[\Phi,\Phi^l]=0$, and $\varphi \circ \psi = \psi \circ \varphi = \varphi^{l+1}$. Moreover for $l\geq 0$ we have
\begin{eqnarray}
  \psi(\theta) & = & \Phi(\Phi(...(\Phi\theta + 2\pi m_{i_\theta}) +2\pi m_{i_{\varphi(\theta)}})+...)+2\pi m_{i_{\varphi^{l-1}(\theta)}} \\
  & = & \Phi^l \theta + 2\pi \underbrace{\sum_{q=0}^{l-1} \Phi^q m_{i_{\varphi^{l-1-q}(\theta)}}}_{= \ell_{i_\theta}}
\end{eqnarray}
and so
\begin{eqnarray}
  & & \Psi m_{i_\theta}-m_{i_{\psi(\theta)}} - \Phi \ell_{i_\theta} + \ell_{i_{\varphi(\theta)}} \nonumber \\
  & & \quad =\Phi^l m_{i_\theta} - m_{i_{\varphi^l(\theta)}} - \sum_{q=0}^{l-1} \Phi^{q+1} m_{i_{\varphi^{l-1-q}(\theta)}} + \sum_{q=0}^{l-1} \Phi^q m_{i_{\varphi^{l-q}(\theta)}} \\
  & & \quad =\Phi^l m_{i_\theta} - m_{i_{\varphi^l(\theta)}} - \sum_{q=1}^{l} \Phi^{q} m_{i_{\varphi^{l-q}(\theta)}} + \sum_{q=0}^{l-1} \Phi^q m_{i_{\varphi^{l-q}(\theta)}} \\
  & & \quad = 0
\end{eqnarray}
and for $l<0$ we have
\begin{eqnarray}
  \psi(\theta) & = & \Phi^{-1}(\Phi^{-1}(...(\Phi^{-1}\theta - 2\pi \Phi^{-1}m_{i_{\varphi^{-1}(\theta)}}) - 2\pi \Phi^{-1} m_{i_{\varphi^{-2}(\theta)}})\nonumber \\
  & & \qquad -...)-2\pi \Phi^{-1} m_{i_{\varphi^{-|l|}(\theta)}} \\
  & = & \Phi^{-|l|} \theta - 2\pi \underbrace{\sum_{q=1}^{|l|} \Phi^{-q} m_{i_{\varphi^{-|l|+q-1}(\theta)}}}_{= -\ell_{i_\theta}}
\end{eqnarray}
and so
\begin{eqnarray}
  & & \Psi m_{i_\theta}-m_{i_{\psi(\theta)}} - \Phi \ell_{i_\theta} + \ell_{i_{\varphi(\theta)}} \nonumber \\
  & &  =\Phi^{-|l|} m_{i_\theta} - m_{i_{\varphi^{-|l|}(\theta)}} + \sum_{q=1}^{|l|} \Phi^{-q+1} m_{i_{\varphi^{-|l|+q-1}(\theta)}} - \sum_{q=1}^{|l|} \Phi^{-q} m_{i_{\varphi^{-|l|+q}(\theta)}} \\
  & &  =\Phi^{-|l|} m_{i_\theta} - m_{i_{\varphi^{-|l|}(\theta)}} + \sum_{q=0}^{|l|-1} \Phi^{-q} m_{i_{\varphi^{-|l|+q}(\theta)}} - \sum_{q=1}^{|l|} \Phi^{-q} m_{i_{\varphi^{-|l|+q}(\theta)}} \\
  & &  = 0
\end{eqnarray}

The explicit expression of the eigenfunctions $f_\lambda$ depends on the nature real or imaginary of $\omega$.

\begin{enumerate}
\item \textit{Pure imaginary case $\omega \in \mathbb R$: }\\
$e_+ = \bar e_- \in \mathbb C^2$, and then $\theta^\pm \in \mathbb C$, moreover $\overline{\theta^+} = \theta^-$ and $\overline{\phi_i^+} = \phi_i^-$. We have then
  \begin{eqnarray}
    \arg(\varphi^{l+1}(\theta)^\pm - \phi^\pm_{i_{\varphi^{l+1}(\theta)}}) & = & \arg(\varphi^l(\theta)^\pm - \phi_{i_{\varphi^l(\theta)}}^\pm) \pm \omega \\
    |\varphi^{l+1}(\theta)^\pm - \phi^\pm_{i_{\varphi^{l+1}(\theta)}}| & = & |\varphi^l(\theta)^\pm - \phi_{i_{\varphi^l(\theta)}}^\pm|
  \end{eqnarray}
  It follows that $\forall \theta \in D_i$
  \begin{eqnarray}
    & & \mathcal K (g_\lambda(|\varphi^l(\theta)^+ - \phi_{i_{\varphi^l(\theta)}}^+|)e^{\imath  \frac{\lambda}{\omega} \arg(\varphi^l(\theta)^+ - \phi_{i_{\varphi^l(\theta)}}^+)} \nonumber \\
    & = & g_\lambda(|\varphi^{l+1}(\theta)^+ - \phi_{i_{\varphi^{l+1}(\theta)}}^+|)e^{\imath  \frac{\lambda}{\omega} \arg(\varphi^{l+1}(\theta)^+ - \phi_{i_{\varphi^{l+1}(\theta)}}^+)} \\
    & = & e^{\imath \lambda} g_\lambda(|\varphi^l(\theta)^+ - \phi_{i_{\varphi^l(\theta)}}^+|)e^{\imath  \frac{\lambda}{\omega} \arg(\varphi^l(\theta)^+ - \phi_{i_{\varphi^l(\theta)}}^+)}
  \end{eqnarray}
  for all function $g_\lambda$. $f_\lambda = g_\lambda e^{\imath \frac{\lambda}{\omega} \arg(\varphi^l(\theta)^+ - \phi_{i_{\varphi^l(\theta)}}^+)}$ are then solution of the Koopman equation. But $\arg$ is multivalued, $\arg(z)=\arg_0(z)+2k\pi$ (where $\arg_0$ is the principal value in $]-\pi,\pi[$), and then $e^{\imath  \frac{\lambda}{\omega} \arg(\theta^+ - \phi_{i_\theta}^+)}$ is ill-defined except if $\lambda \in \mathbb Z \omega$. Finally by choosing for basis of $L^2([|\phi_i^+|,|2\pi(e^+_1+e^+_2)-\phi_i^+|],dr)$ the polynomials $g_m(|\theta^+ - \phi_i^+|)=|\theta^+ - \phi_i^+|^m$, we have the following Koopman modes: $\forall n \in \mathbb Z$, $l\in\mathbb N$, $p\in \mathbb N$:
  \begin{equation}
    f_{nml}(\theta) = g_m(|\varphi^l(\theta)^+ - \phi_{i_{\varphi^l(\theta)}}^+|)e^{\imath n \arg(\varphi^l(\theta)^+ - \phi_{i_{\varphi^l(\theta)}}^+)}
  \end{equation}
  with $e^{\imath \lambda_n}= e^{\imath n \omega}$ of infinite multiplicity.

\item \textit{Real case $\omega \in \imath \mathbb R$: }\\
  $\theta^\pm, \phi_i^\pm \in \mathbb R$. We have then
  \begin{equation}
    \ln(\varphi^{l+1}(\theta)^\pm - \phi^\pm_{i_{\varphi^{l+1}(\theta)}}) = \ln(\varphi^l(\theta)^\pm - \phi_{i_{\varphi^l(\theta)}}^\pm) \pm \imath \omega
  \end{equation}
  It follows that for any $\alpha \in \mathbb R$, $\forall \theta \in D_i$
  \begin{eqnarray}
    & & \mathcal K e^{\frac{\lambda}{\omega} \left(\alpha \ln (\varphi^l(\theta)^+-\phi^+_{i_{\varphi^l(\theta)}}) - (1-\alpha) \ln (\varphi^l(\theta)^--\phi^-_{i_{\varphi^l(\theta)}})\right)} \nonumber \\
    & = & e^{\frac{\lambda}{\omega} \left(\alpha \ln (\varphi^{l+1}(\theta)^+-\phi^+_{i_{\varphi^{l+1}(\theta)}}) - (1-\alpha) \ln (\varphi^{l+1}(\theta)^--\phi^-_{i_{\varphi^{l+1}(\theta)}})\right)} \\
  & = &  e^{\frac{\lambda}{\omega} \left(\alpha (\ln (\varphi^l(\theta)^+-\phi^+_{i_{\varphi^l(\theta)}})+\imath \omega) - (1-\alpha) (\ln (\varphi^l(\theta)^--\phi^-_{i_{\varphi^l(\theta)}})-\imath\omega)\right)} \\
  & = & e^{\imath \lambda} e^{\frac{\lambda}{\omega} \left(\alpha \ln (\varphi^l(\theta)^+-\phi^+_{i_{\varphi^l(\theta)}}) - (1-\alpha) \ln (\varphi^l(\theta)^--\phi^-_{i_{\varphi^l(\theta)}})\right)}
\end{eqnarray}
  The solutions to the Koopman equation are then
  \begin{equation}
    f_{\lambda\alpha l}(\theta)  =  e^{-\imath \frac{\lambda}{\IM \omega} \left(\alpha \ln (\varphi^l(\theta)^+-\phi^+_{i_{\varphi^l(\theta)}}) - (1-\alpha) \ln (\varphi^l(\theta)^--\phi^-_{i_{\varphi^l(\theta)}})\right)}
  \end{equation}
\end{enumerate}
\end{proof}

These Koopman modes reflect the fact that the neighbourhood of the orbit of a cyclic point $\phi^\pm_{i_\theta}$ is ``driven'' by its dynamics, inducing the existence of observables (these Koopman modes) stable under this cyclicality. We discuss in more detail the relation between these Koopman modes and the properties of $\varphi$ in section III, by treating the different cases which occur. The global behaviour of the flow reflected by these full Koopman modes are the cyclic drivings of the phase space points and consequently of the phase space observables. The full Koopman modes are then continuous on the phase space regions influenced by a single driving (as we will see it section III), ensuring the coherence on the whole of $\Gamma$. Moreover if $\omega \in \mathbb R$, they constitute countable basis for the eigenspaces (even if $\overline{\{e^{\imath n \omega}\}_{n \in \mathbb Z}}$ is continuous, i.e. $\omega \not\in 2\pi \mathbb Q$).\\

The normalisation factor of $e^\pm$ being arbitrary (since $(e^\pm)$ are bi-orthonormal to $(e_\pm)$), the expression of the complex points $\phi^\pm_{i_\theta}$ depends on this arbitrary choice (the complexification of the torus $\Gamma_\theta$ depends on this arbitrary choice). Under a gauge change $\tilde e^\pm = \pm \beta e^\pm$ (with $\beta \in \mathbb C$), we have in the real case $\tilde f_{nml} = |\beta|^m e^{\imath n \arg \beta} f_{nml}$ and in the pure imaginary case $\tilde f_{\lambda \alpha l} = e^{\imath \frac{\lambda}{\IM(\omega)} \beta} f_{\lambda \alpha l}$ (these are just changes of normalisation and phase factors of the Koopman modes).\\

In the case $\omega \in \mathbb R$, we can note that if $\frac{\omega}{2\pi} \not\in \mathbb Q$, then $\Sp(\mathcal K) = \overline{\{e^{\imath n \omega}\}_{n \in \mathbb Z}} = U(1)$. In contrast, in the resonant case $\frac{\omega}{2\pi} \in \mathbb Q$, we have $\omega = \frac{2\pi q}{N}$ with $q,N\in \mathbb N$ coprimes and $\Sp(\mathcal K) = \{e^{\imath \frac{2\pi}{N}n}\}_{n \in \{0,...,N-1\}}$. We have then $\Phi^N = \id$. It follows that $\varphi^N(\theta)^\pm-\phi^\pm_{i_{\varphi^N(\theta)}} = \theta^\pm - \phi^\pm_{i_{\theta}}$. The degeneracy of $\lambda_n$ is then restricted to $l \in \{0,...,N-1\}$. \\
If moreover the patches partition of $\Gamma$ is invariant, i.e. $\varphi(D_i) = D_i$ ($\forall i$), then $\varphi(\theta)^\pm - \phi^\pm_{i_{\varphi(\theta)}} = e^{\pm \imath \omega} \theta^\pm + 2\pi e_\mu^\pm \phi^\mu_{i_\theta} - \phi^\pm_{i_\theta} = e^{\pm \imath \omega}(\theta^\pm - \phi^\pm_{i_\theta})$ (since $i_{\varphi(\theta)}=i_\theta$). In a same way, by induction, $\varphi^l(\theta)^\pm - \phi^\pm_{i_{\varphi^l(\theta)}} = e^{\pm \imath l \omega}(\theta^\pm-\phi^\pm_{i_\theta})$ (indeed $\varphi^l(\theta)^\pm - \phi^\pm_{i_{\varphi^l(\theta)}} = e^{\pm \imath l \omega}(\theta^\pm-\phi^\pm_{i_\theta}) \Rightarrow \varphi^{l+1}(\theta)^\pm - \phi^\pm_{i_{\varphi^{l+1}(\theta)}} = e^{\pm \imath \omega} \varphi^l(\theta)^\pm + 2\pi e_\mu^\pm m^\mu_{i_\theta} - \phi^\pm_{i_\theta} = e^{\pm \imath \omega} (e^{\pm \imath l \omega}(\theta^\pm-\phi^\pm_{i_\theta}) + \phi^\pm_{i_\theta}) -e^{\pm \imath \omega} \phi^\pm_{i_\theta} = e^{\pm \imath (l+1)\omega} (\theta^\pm-\phi^\pm_{i_\theta})$). It follows that $\forall l$, $f_{nml} = e^{\imath l \lambda_n} f_{nm0}$. $\lambda_n$ is then not degenerate with $l$ which can be fixed to $0$.\\

We can now examine in detail the relation between the Koopman modes of ergodic components and the full Koopman modes, with respect to the dynamical behaviour of the cat map.

\section{The Koopman modes with respect to the different classical dynamical regimes}
\subsection{$N$-cyclic cat map case}
We consider the case $\Phi \in SL(2,\mathbb Z)$ where $\omega = \frac{2\pi q}{N}$ with $q,N$ coprime, and then where $\Phi^N=\id$. In this case $\varphi^N(\theta) = \Phi^N \theta + 2\pi \sum_{k=0}^{N-1} \Phi^k m_{i_{\varphi^{N-1-k}(\theta)}} = \theta \mod 2\pi = \theta$; all orbits are at least $N$-cyclic. $\Omega$ is 2 dimensional and we have $\mu(\Omega \cap \varphi^n(\Omega))=0$, $\forall n \in \{1,...,N-1\}$ and $\overline{\bigcup_{n=0}^{N-1} \varphi^n(\Omega)} = \Gamma$. $\forall \theta_0 \in \Omega$, the Koopman modes of the ergodic component $\Gamma_{\theta_0} = \{\varphi^n(\theta_0)\}_{n\in\{0,...,N-1\}}$ are $\forall n \in \{0,...,N-1\}$
\begin{equation}
  f_{n\theta_0}(\theta) = \sum_{p=0}^{N-1} e^{\imath \frac{2\pi}{N} np} \delta_{\varphi^p(\theta_0)}(\theta) \text{ with } e^{\imath \lambda_n} = e^{\imath \frac{2\pi}{N}n}
\end{equation}
with $f_{n\theta_0} \in \mathbb C^N_{\theta_0}$ (the finite dimensional Hilbert space generated by $(\delta_{\varphi^p(\theta_0)})_{p \in \{0,...,N-1\}}$).
\begin{equation}
  L^2(\Gamma,d\mu) = \int_\Omega^\oplus \mathbb C^N_{\theta_0} \frac{d\mu(\theta_0)}{\mu(\Omega)}
\end{equation}

Let $f_{nml}$ ($l < N$) be a full Koopman mode eq. (\ref{solcat}) associated with $e^{\imath \frac{2\pi q}{N} n}$. Due to the $N$-cyclicality and the covering by the ergodic components we can write:
\begin{eqnarray}
  f_{nml} & = & \int_{\Omega}^\oplus \sum_{p=0}^{N-1} f_{nml}(\varphi^p(\theta_0)) \delta_{\varphi^p(\theta_0)} \frac{d\mu(\theta_0)}{\mu(\Omega)} \\
  & = & \int_\Omega^\oplus \sum_{p=0}^{N-1} |\varphi^l(\theta_0)^+-\phi_{i_{\varphi^l(\theta_0)}}^+|^m e^{\imath pn \frac{2\pi q}{N}} e^{\imath n\arg(\varphi^l(\theta_0)^+-\phi^+_{i_{\varphi^l(\theta_0)}})} \nonumber \\
  & & \quad \times \delta_{\varphi^p(\theta_0)} \frac{d\mu(\theta_0)}{\mu(\Omega)} \\
  & = & \int_\Omega^\oplus |\varphi^l(\theta_0)^+-\phi_{i_{\varphi^l(\theta_0)}}^+|^m e^{\imath pn \frac{2\pi q}{N}} e^{\imath n\arg(\varphi^l(\theta_0)^+-\phi^+_{i_{\varphi^l(\theta_0)}})} \nonumber \\
  & & \quad \times \sum_{p=0}^{N-1} e^{\imath p (nq \mod N)\frac{2\pi}{N}} \delta_{\varphi^p(\theta_0)} \frac{d\mu(\theta_0)}{\mu(\Omega)} \\
  & = & \int_\Omega^\oplus |\varphi^l(\theta_0)^+-\phi_{i_{\varphi^l(\theta_0)}}^+|^m e^{\imath n \arg(\varphi^l(\theta_0)^+-\phi^+_{i_{\varphi^l(\theta_0)}})} \nonumber \\
  & & \quad \times f_{(nq \mod N)\theta_0}  \frac{d\mu(\theta_0)}{\mu(\Omega)}
\end{eqnarray}

The Koopman spectrum is pure point: $\Sp(\mathcal K) = \Sp_{pp}(\mathcal K) = \{e^{\frac{2\imath \pi}{N} n}\}_{n \in \{0,...,N-1\}}$, but not discrete since each eigenvalue is of infinite multiplicity:
\begin{eqnarray}
  & & \ker(\mathcal K - e^{\frac{2\imath \pi}{N}n}) \nonumber \\
  & & = \Lin(f_{n\theta_a})_{\scriptstyle \theta_a \in \Omega}\\
  & & = \Lin\left(|\varphi^l(\theta)^+-\phi_{i_{\varphi^l(\theta)}}^+|^m e^{\imath k_n \arg (\varphi^l(\theta)^+-\phi^+_{i_{\varphi^l(\theta)}})}\right)_{\begin{array}{l} \scriptstyle m \in \mathbb N \\ \scriptstyle l \in\{0,...,N-1\} \\ \scriptstyle k_n \text{ s.t. } k_nq \mod N = n \end{array}}
\end{eqnarray}

In the generic case, the first return time in $D_i$ is $N$ ($\forall i$), and $\sum_{k=0}^{N-1} \Phi^k m_{i_{\varphi^{N-1-k}(\theta)}}=0$ induces that the formula of $\phi_{i_\theta}$ does not hold. The full Koopman modes are then reduced to
\begin{equation}
  f_{nml}(\theta) = |\varphi^l(\theta)^+|^m e^{\imath n \arg \varphi^l(\theta)^+}
\end{equation}
for $ n\in \mathbb Z, m \in \mathbb N, l \in \{0,...,N-1\}$.\\

 More interesting situations occur when the first return time in $D_i$ is lower than $N$, for example when the partition $\{D_i\}_i$ is invariant by $\varphi$.

  \begin{example}
  Let $\Phi = \left(\begin{array}{cc} -1 & 1 \\ -1 & 0 \end{array}\right)$ be for which $\Phi^3 = \id$. Almost all $\theta \in \Gamma$ are then $3$-cyclic. The two patches are $D_1 = \{(\theta^1,\theta^2)\in [0,2\pi[^2, \theta^2 \leq \theta^1 \}$ and $D_2= \{(\theta^1,\theta^2)\in [0,2\pi[^2, \theta^2 \geq \theta^1 \}$, with the associated translations $m_1 = (1,1)$ and $m_2=(0,1)$, see fig. \ref{cat1patches}.
          \begin{figure}
            \begin{center}
            \includegraphics[width=5cm]{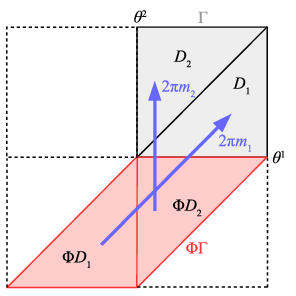}\\
            \caption{\label{cat1patches} Graphical representation of $\varphi: \theta \mapsto \Phi \theta + 2\pi m_i$, with $\Phi = \left(\begin{array}{cc} -1 & 1 \\ -1 & 0 \end{array}\right)$. $\Gamma$ is represented by $[0,2\pi[^2$.}
            \end{center}
          \end{figure}
The partition is here invariant $\varphi(D_1)=D_1$ and $\varphi(D_2)=D_2$. The time of first return into a patch is then 1 for all $\theta$ (we are then interested by the complex fixed points of the system) and the degeneracy of the Koopman value with the iterations of $\varphi$ is killed.\\          

 $\Sp(\Phi) = \{e^{\pm \imath \frac{2\pi}{3}}\}$ ($\omega = \frac{2\pi}{3}$), with the following right and left eigenvectors:
    \begin{equation}
      e_\pm = \frac{1}{e^{\mp \imath \frac{2\pi}{3}}-1} \left(\begin{array}{c} e^{\pm \imath \frac{2\pi}{3}} \\ -1 \end{array} \right) \qquad e^\pm = \left(\begin{array}{cc} e^{\pm \imath \frac{2\pi}{3}} & 1 \end{array} \right)
    \end{equation}
    The complex coordinates of the complexification $\Gamma_{\mathbb C}$ are then $\theta^\pm  =  e^{\pm \imath \frac{2\pi}{3}} \theta^1+\theta^2$. The complex fixed points associated with the two patches are
    \begin{eqnarray}
      \phi^\pm_1 & = & \frac{2\pi(e^{\pm \imath \frac{2\pi}{3}}+1)}{1-e^{\pm \imath \frac{2\pi}{3}}} = \pm 2\pi \imath \frac{\sqrt 3}{3} \\
      \phi^\pm_2 & = & \frac{2\pi}{1-e^{\pm \imath \frac{2\pi}{3}}} = \pi \pm \imath \pi \frac{\sqrt 3}{3}
    \end{eqnarray}
    These two fixed points are in real coordinates in $\Gamma$: $\phi_1 = (\frac{4\pi}{3},\frac{2\pi}{3})$ and $\phi_2=(\frac{2\pi}{3},\frac{4\pi}{3})$. Indeed in complex coordinates
    \begin{eqnarray}
      e^\pm_\mu \phi_1^\mu & = & \pm 2\pi \imath \frac{\sqrt 3}{3} = \phi^\pm_1 \\
      e^\pm_\mu \phi_2^\mu & = & \pi \pm \imath \pi \frac{\sqrt 3}{3} = \phi^\pm_2
    \end{eqnarray}
    We find $\Omega$ by connecting the fixed points -- including $0$ -- (which are at the intersections of $\Omega$ and its images), for example we can choose (see fig. \ref{cat1Omega}):
    \begin{equation}
      \Omega = \{(\theta^1,\theta^2)\in[0,2\pi[^2, \max(\frac{\theta^1}{2},2\theta^1-2\pi) \leq \theta^2 \leq \min(\frac{\theta^1}{2}+\pi,2\theta^1) \}
    \end{equation}

    \begin{figure}
      \begin{center}
        \includegraphics[width=5cm]{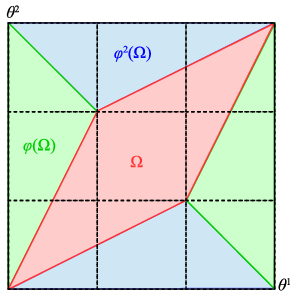}\includegraphics[width=5cm]{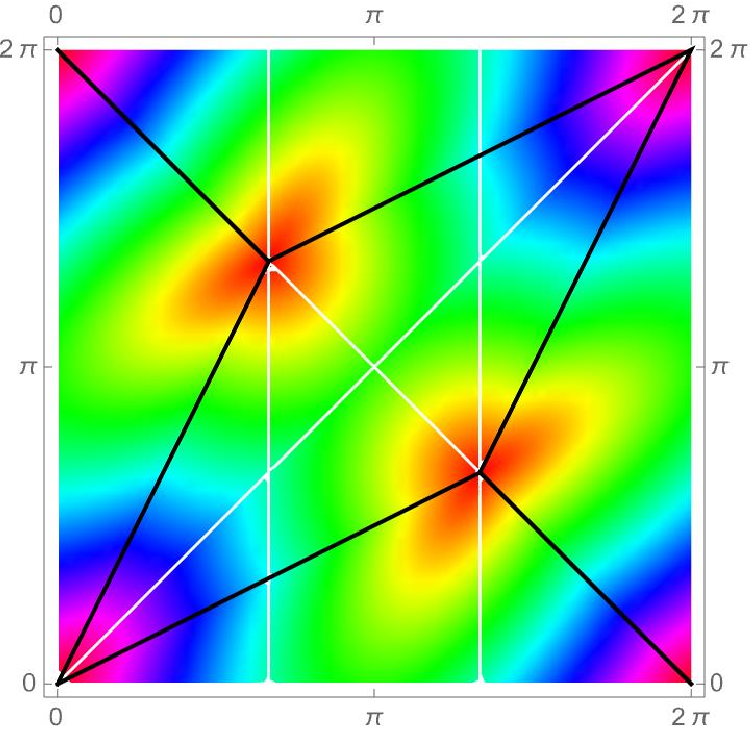}\\
        \caption{\label{cat1Omega} $\Omega$, $\varphi(\Omega)$ and $\varphi^2(\Omega)$ for $\Phi = \left(\begin{array}{cc} -1 & 1 \\ -1 & 0 \end{array}\right)$. Right: same as left but with a colouring of the orbits according to the distance from their initial points to the nontrivial fixed points.}
      \end{center}
    \end{figure}

    $\Sp(\mathcal K) = \{1,e^{\imath \frac{2\pi}{3}},e^{\imath \frac{4\pi}{3}}\}$ and the Full Koopman modes associated with $e^{\imath n \frac{2\pi}{3}}$ are then $\forall m \in \mathbb N$ 
    \begin{equation} \label{modescat1}
      f_{nm}(\theta) = \left\{ \begin{array}{cc} |e^{\imath \frac{2\pi}{3}} \theta^1 + \theta^2 -\phi_1^+|^m e^{\imath n \arg( e^{\imath \frac{2\pi}{3}} \theta^1 + \theta^2 -\phi_1^+)} & \text{if } \theta \in D_1 \\
      |e^{\imath \frac{2\pi}{3}} \theta^1 + \theta^2 -\phi_2^+|^m e^{\imath n \arg( e^{\imath \frac{2\pi}{3}} \theta^1 + \theta^2 -\phi_2^+)} & \text{if } \theta \in D_2
      \end{array} \right.
    \end{equation}
    These ones are represented fig. \ref{KoopModescat1}.
      \begin{figure}
        \begin{center}
          \includegraphics[width=10cm]{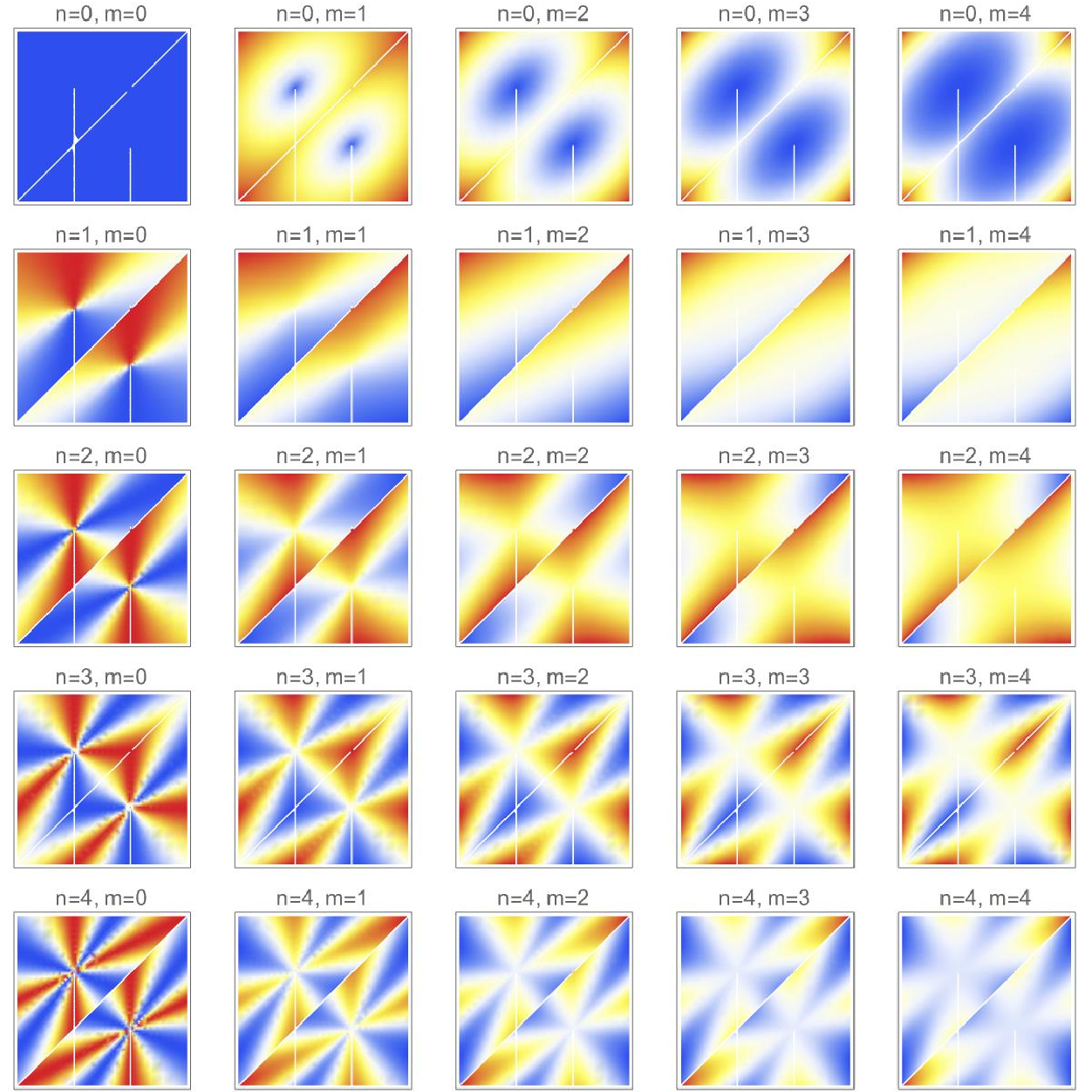}\\
          \caption{\label{KoopModescat1} The real part of the Koopman modes $\RE(f_{nm}(\theta))$ eq. (\ref{modescat1}) for different values of $n$ and $m$, for the cat system defined by $\Phi = \left(\begin{array}{cc} -1 & 1 \\ -1 & 0 \end{array}\right)$ for which $\frac{\omega}{2\pi}=\frac{1}{3}$.}
        \end{center}
      \end{figure}
      The eigensubspaces are $\forall n \in \{0,1,2\}$, $\ker(\mathcal K-e^{\imath \frac{2\pi n}{3}}) = \{f_{n+3p,m}\}_{p \in \mathbb Z, m \in \mathbb N}$. These Koopman modes are characteristic of the fact that the orbit of a point in $D_i$ turns around $\phi_i$ inside $D_i$ (fig. \ref{cat1Omega}). The dynamics inside $D_1$ and $D_2$ are then completely independent (the partition being invariant) and are essentially rotation around the non-trivial fixed points. The stable observables coherent on the whole of $\Gamma$, the full Koopman modes (fig. \ref{KoopModescat1}) reflect this behaviour, with inside each patch a central symmetry with respect to the fixed points. By construction, $n$ is the ``wave number'' of the oscillations of the stable observables around the nontrivial fixed points, whereas $m$ is the power of the amplitude growing of the stable observables (waves) with respect to the distance to the nontrivial fixed points. Noting than $m$ being a degeneracy index, we can built by superposition full Koopman modes of the form $g(|e^{\imath \frac{2\pi}{3}} \theta^1 + \theta^2 +\phi_1^+|)e^{\imath n \arg( e^{\imath \frac{2\pi}{3}} \theta^1 + \theta^2 +\phi_1^+)}$ for any analytical function $g$.
\end{example}

\subsection{Quasi-cyclic cat map case}
We consider the case where $\omega \not\in 2\pi \mathbb Q$. The orbits are quasi-cyclic: let $\epsilon>0$ and $q_\epsilon,N_\epsilon \in \mathbb N$ be coprime integers such that $|\frac{\omega}{2\pi} - \frac{q_\epsilon}{N_\epsilon}|\leq \epsilon$. Let $\eta_\epsilon = \frac{\omega}{2\pi} - \frac{q_\epsilon}{N_\epsilon}$. We have then the quasi-cyclicality:
\begin{eqnarray}
  \Phi^{N_\epsilon} & = & \left(\begin{array}{cc} e^{\imath \omega N_\epsilon} & \\ 0 & e^{-\imath \omega N_\epsilon} \end{array} \right)_{(e_a)} \\
  & = & \left(\begin{array}{cc} e^{\imath 2\pi q_\epsilon + \imath 2\pi N_\epsilon \eta_\epsilon} & \\ 0 & e^{-\imath 2\pi q_\epsilon - \imath 2\pi N_\epsilon \eta_\epsilon } \end{array} \right)_{(e_a)} \\
  & = & \id+ 2\imath \pi N_\epsilon \eta_\epsilon \left(\begin{array}{cc} 1 &0 \\ 0 & -1 \end{array}\right)_{(e_a)} + \mathcal O(\epsilon^2)
\end{eqnarray}

$\Omega$ is 1 dimensional, let $a \in [0,\ell(\Omega)[$ (where $\ell(\Omega)$ is the length of the curve $\Omega$) be a curvilinear coordinates onto $\Omega$ ($\theta_a \in \Omega \subset \Gamma$ is a point parametrised by $a$). The ergodic component $\Gamma_a = \overline{\{\varphi^n(\theta_a)\}_{n \in \mathbb Z}}$ is a closed path in $\Gamma$, not necessarily connected ($\Gamma_a$ can be a set of several disjoint connected closed paths - cycles -). Let $n_a$ be the number of connected cycles forming $\Gamma_a$, we have
    \begin{equation}
      L^2(\Gamma,d\mu) = \int_\Omega^\oplus \bigoplus_{i=1}^{n_a} L^2(\Gamma_a^i,\frac{d\vartheta^i}{2\pi}) \frac{da}{\ell(\Omega)}
    \end{equation}
    where $\Gamma_a^i$ is a connected cycle of $\Gamma_a$ and $\vartheta^i \in [0,2\pi[$ is its local coordinates. The Koopman modes of an ergodic component are $\forall e^{\imath \lambda} \in U(1)$:
    \begin{equation}
      f_{\lambda a} = \sum_{p=-\infty}^{+\infty} e^{\imath p \lambda} \delta_{\varphi^p(\theta_a)}
    \end{equation}
    For $\lambda = n\omega$ ($n \in \mathbb Z$), we have
    \begin{eqnarray}
  f_{n\omega, a} & = & \sum_{p=-\infty}^{+\infty} e^{\imath pn \omega} \delta_{\varphi^p(\theta_a)} \\
  & = & \sum_{p=-\infty}^\infty e^{\imath pn 2\pi \eta_\epsilon} e^{\imath pn \frac{2\pi q_\epsilon}{N_\epsilon}} \delta_{\varphi^p(\theta_a)} \\
  & = & \sum_{k=-\infty}^{+\infty} \sum_{p=0}^{N_\epsilon-1} e^{\imath (p+kN_\epsilon)n2\pi \eta_\epsilon} e^{\imath pn \frac{2\pi q_\epsilon}{N_\epsilon}} \delta_{\varphi^{p+kN_\epsilon}(\theta_a)}
    \end{eqnarray}
    in which we find an approximated Koopman mode of a $N_\epsilon$-cyclic system $\sum_{p=0}^{N_\epsilon-1} e^{\imath p(nq_\epsilon \mod N_\epsilon) \frac{2\pi}{N_\epsilon}} \delta_{\varphi^{p}(\theta_0)} + \mathcal O(\epsilon)$. For $\lambda \not\in \mathbb Z\omega$, we can find $n_\epsilon$ such that $\lambda = n_\epsilon \omega + \mathcal O(\epsilon^2)$.\\

    Let $f_{nml}$ be a full Koopman mode eq. (\ref{solcat}) associated with $e^{\imath n \omega}$. Due to the covering by the ergodic components we can write:
    \begin{eqnarray}
  f_{nml} & = & \int_{\Omega}^\oplus \sum_{p=-\infty}^{+\infty} f_{nml}(\varphi^p(\theta_a)) \delta_{\varphi^p(\theta_a)} \frac{da}{\ell(\Omega)} \\
  & = & \int_{\Omega}^\oplus |\varphi^l(\theta_a)^+-\phi_{i_{\varphi^l(\theta_a)}}^+|^m e^{\imath n\arg(\varphi^l(\theta_a)^+-\phi_{i_{\varphi^l(\theta_a)}}^+)} \nonumber \\
  & & \qquad \times \sum_{p=-\infty}^{+\infty} e^{\imath pn \omega} \delta_{\varphi^p(\theta_a)} \frac{da}{\ell(\Omega)} \\
  & = & \int_{\Omega}^\oplus |\varphi^l(\theta_a)^+-\phi_{i_{\varphi^l(\theta_a)}}^+|^m e^{\imath n\arg(\varphi^l(\theta_a)^+-\phi_{i_{\varphi^l(\theta_a)}}^+)} f_{n\omega, a}  \frac{da}{\ell(\Omega)}
    \end{eqnarray}

    $\forall e^{\imath \lambda} \in U(1) \setminus \{e^{\imath n \omega}\}_{n \in \mathbb Z}$, $\exists (n_p)_p \in \mathbb N^{\mathbb N}$ such that $\lim_{p \to +\infty} e^{\imath n_p \omega} = e^{\imath \lambda}$ (i.e. $\lim_{p \to +\infty} n_p \omega = \lambda \mod 2\pi$). For a fixed value of $m \in \mathbb N$, $(f_{n_p ml})_{p \in \mathbb N}$ is a Weyl sequence for $e^{\imath \lambda}$. $(f_{n_p ml})_{p \in \mathbb N}$  weakly converges to $f_{\lambda m l}(\theta) = |\varphi^l(\theta)^+-\phi_{i_{\varphi^l(\theta)}}^+|^m e^{\imath \frac{\lambda}{\omega} \arg(\varphi^l(\theta)^+-\phi^+_{i_{\varphi^l(\theta)}})}$.\\

    $\Sp(\mathcal K) = U(1)$. $e^{\imath n \omega}$ is a synthetic spectral value of the continuum which is also an approximate synthetic eigenvalue of $\left. \mathcal K \right|_{\Gamma_a}$ associated with the approximate eigenvector $\sum_{q=0}^{N_\epsilon-1} e^{\imath p n\omega} \delta_{\varphi^p(\theta_a)}$. But $e^{\imath n \omega}$ is a true eigenvalue of $\mathcal K$ (because of $f_{nml} \in L^2(\Gamma,d\mu)$). We have then the case of synthetic spectral values of the continuum of $\left. \mathcal K \right|_{\Gamma_a}$  which become pure point eigenvalues of $\mathcal K$ (a situation related here to the regular quasi-cyclicality of the dynamics). We have then $\Sp_{pp}(\mathcal K) = \{e^{\imath n \omega}\}_{n \in \mathbb N}$. In other words, since $f_{n\omega,a}$ is in ``synthesis resonance'' with all $f_{n\omega,a'}$ (with $a'$ in the neighbourhood of $a$ in $\Omega$), the synthetic value of the continuum $n\omega$ becomes a pure point eigenvalue of the synthetised operator $\mathcal K$.\\
    Since the cat map presents also some cyclic orbits of any period $N$ ($\{\phi_i^\pm\}$), these ones are special ergodic components. We have then $\Sp(\left. \mathcal K \right|_{\Orb(\theta_*)}) = \{e^{\imath \frac{2\pi n}{N_{\theta_*}}}\}_{n \in \{0,...,N_{\theta_*}-1\}}$ for $\theta_*$ a $N_{\theta_*}$-cyclic point. But since the set of these cyclic points is of zero measure, these eigenvalues are not synthetic.

\begin{example}
  Let $\Phi = \left(\begin{array}{cc} 2\cos \omega & 1 \\ -1 & 0 \end{array} \right)$ with  $\omega = \sqrt 2 \pi$ (and then $\frac{\omega}{2\pi} = \frac{\sqrt 2}{2} \not\in \mathbb Q$). The two patches are $D^1=\{(\theta^1,\theta^2) \in [0,2\pi[^2, \theta^2 \leq -2\cos(\omega) \theta^1\}$ and $D^2=\{(\theta^1,\theta^2) \in [0,2\pi[^2, \theta^2 \geq -2\cos(\omega) \theta^1\}$ with the associated translations $m_1=(1,1)$ and $m_2=(0,1)$ see fig. \ref{cat2patches}.
          \begin{figure}
            \begin{center}
              \includegraphics[width=5cm]{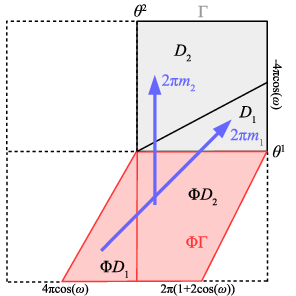} \includegraphics[width=5.5cm]{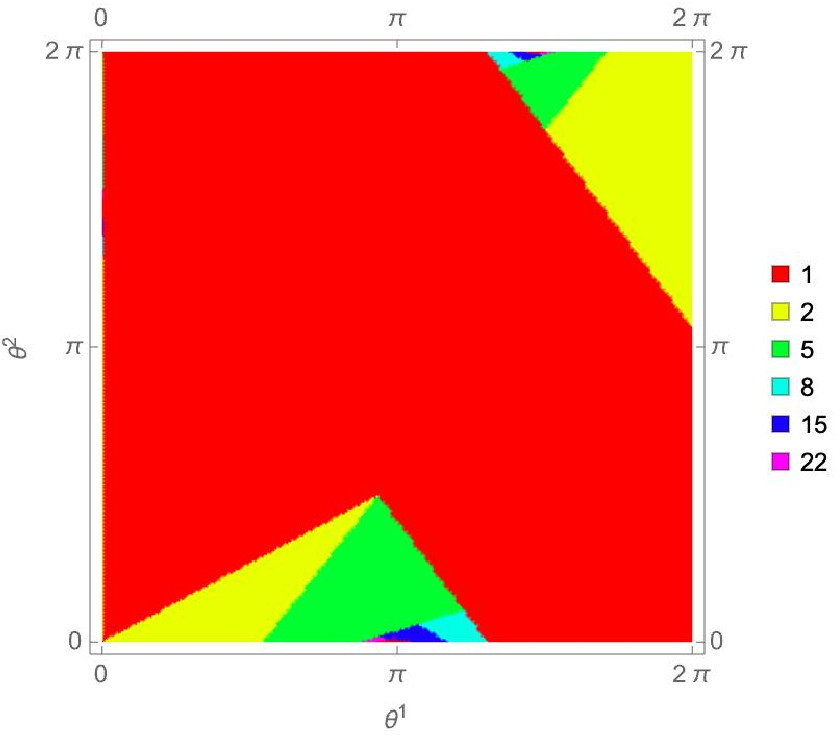}\\
              \caption{\label{cat2patches} Graphical representation of $\varphi: \theta \mapsto \Phi \theta + 2\pi m_i$ (left), with $\Phi = \left(\begin{array}{cc} 2 \cos \omega & 1 \\ -1 & 0 \end{array}\right)$; and for each point $\theta$, the value $N_\theta$ of its first return time in its initial patch $D_{i_\theta}$ (right). $\Gamma$ is represented by $[0,2\pi[^2$.}
            \end{center}
          \end{figure}
The partition being not invariant nor cyclic, $N_\theta$ the first return time in $D_{i_\theta}$ of the orbit of $\theta$ changes with $\theta$.\\

$\Sp(\Phi) = \{e^{\pm \imath \sqrt 2 \pi}\}$ with the following right and left eigenvectors:
    \begin{equation}
      e_\pm = \frac{1}{e^{\pm 2 \imath \omega}-1} \left(\begin{array}{c} e^{\pm \imath \omega} \\ -1 \end{array} \right) \qquad e^\pm = \left(\begin{array}{cc} e^{\pm \imath \omega} & 1 \end{array} \right)
    \end{equation}
    The coordinates of the complexification of the torus $\Gamma_{\mathbb C}$ are then $\theta^\pm  =  e^{\pm \imath \omega} \theta^1+\theta^2$. For the points $\theta$ returning immediately in their initial patch (red in the right of fig. \ref{cat2patches}), the associated complex fixed points are
    \begin{eqnarray}
      \phi^\pm_1 & = & 2\pi \frac{e^{\pm \imath \omega} + 1}{1-e^{\pm \imath \omega}} = \pm \imath \frac{2\pi \sin \omega}{1-\cos \omega}\\
      \phi^\pm_2 & = & \frac{2\pi}{1-e^{\pm \imath \omega}} = \pi \pm \imath \pi \frac{\sin \omega}{1-\cos \omega}
    \end{eqnarray}
    In real coordinates ($\phi^\pm_a = e^\pm_\mu \phi^\mu_a$) we have
    \begin{eqnarray}
      \phi_1 & = & \left(\frac{2\pi}{1-\cos \omega} , - \frac{2\pi \cos \omega}{1-\cos \omega} \right)\\
      \phi_2 & = & \left(\frac{\pi}{1-\cos \omega} , \pi \frac{1-2\cos \omega}{1-\cos \omega} \right)
    \end{eqnarray}

    For the other cases where the first return time $N_\theta$ is not 1, we have more generally:
    \begin{eqnarray}
      \phi_{i_\theta}^\pm & = & \frac{2\pi \sum_{q=0}^{N_{\theta}-1} e^{\pm \imath (N_\theta-1-q)\omega} (e^{\pm \imath \omega} \pmb 1_{D_1}(\varphi^q(\theta))+ 1)}{1-e^{\pm \imath N_\theta \omega}} \\
      & = & 2\pi \frac{\sum_{q=0}^{N_\theta-1} e^{\pm \imath(N_\theta-q)\omega} \pmb 1_{D_1}(\varphi^q(\theta))}{1-e^{\pm \imath N_\theta \omega}} + \frac{2\pi}{1-e^{\pm \imath \omega}} \\
      & = & \left\{\begin{array}{cc} \frac{2 \pi e^{\pm \imath N_\theta \omega}}{1-e^{\pm \imath N_\theta \omega}} + \frac{2\pi}{1-e^{\pm \imath \omega}} & \text{if } i_\theta=1 \\
      \frac{4\pi}{1-e^{\pm \imath \omega}} - \frac{2\pi}{1-e^{\pm \imath N_\theta \omega}} & \text{if } i_\theta = 2 \end{array} \right.
    \end{eqnarray}
    where $N_\theta$ is the time of first return of the orbit of $\theta$ into $D_{i_\theta}$ and $\pmb 1_{D_1}$ is the indicator function of $D_1$. We have used the fact that if $i_\theta =1$ then $i_{\varphi(\theta)}=...=i_{\varphi^{N_\theta-1}(\theta)}=2$ (the first return to $D_1$ being after $N_\theta$ iterations); and if $i_\theta=2$ then $i_{\varphi(\theta)}=...=i_{\varphi^{N_\theta-1}(\theta)}=1$. $\pmb 1_{D_1}(\varphi^q(\theta))=0$ if $i_\theta=1$ except for $q=0$ and $\pmb 1_{D_1}(\varphi^q(\theta))=1$ if $i_\theta=2$ except for $q=0$.\\
    
      The ergodic components $\{\Gamma_a\}_{a \in \Omega}$ are organised around the cyclic points of $\varphi$, see fig. \ref{cat2Cycles}.
    \begin{figure}
      \begin{center}
        \includegraphics[width=5cm]{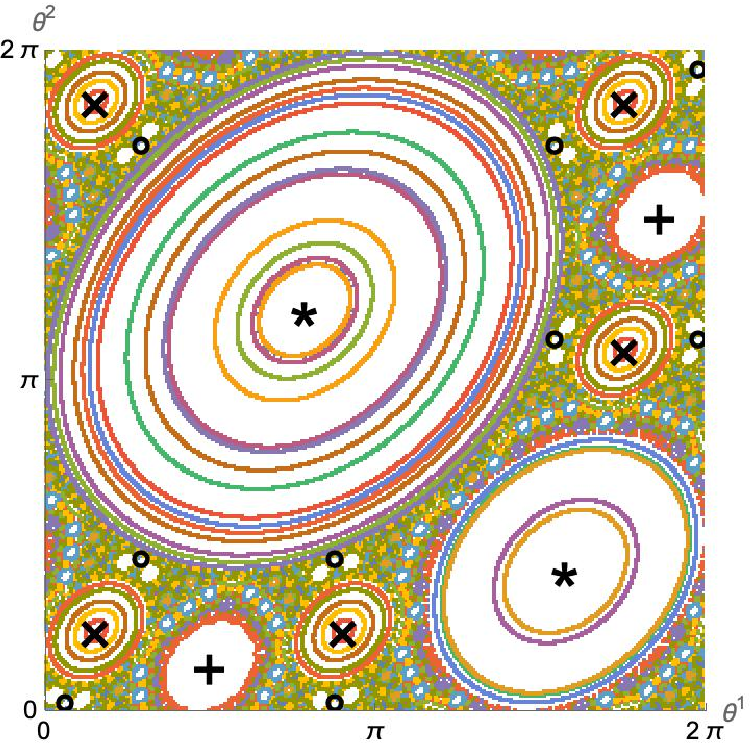} \includegraphics[width=5cm]{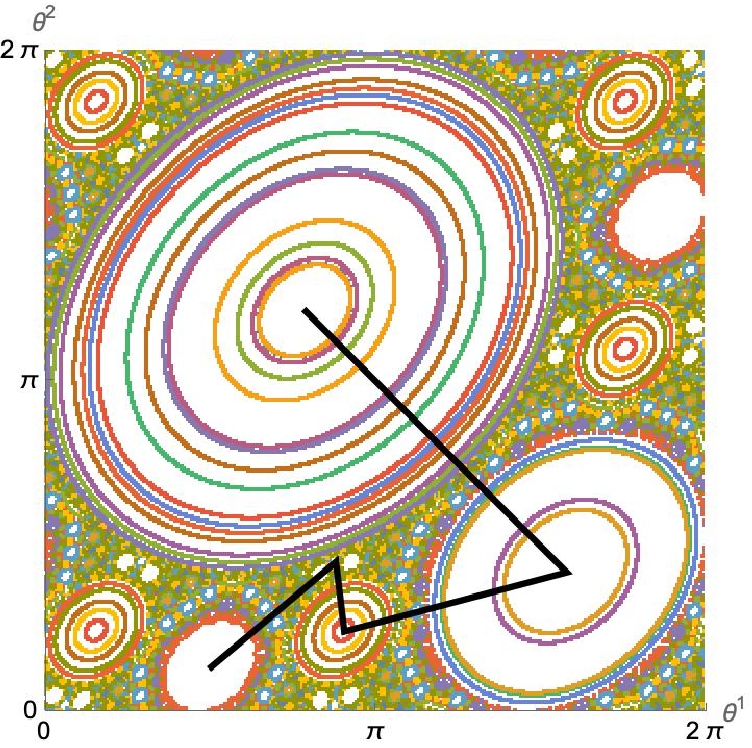}\\
        \caption{\label{cat2Cycles} Representation of some orbits $\{\varphi^n(\theta_0)\}_{n \in \mathbb N}$ in $\Gamma$ (with $\Phi = \left(\begin{array}{cc} 2\cos\omega & 1 \\ -1 & 0 \end{array}\right)$), with the position of the fixed points ($*$), 2-cyclic points ($+$), 5-cyclic points ($\times$), and 8-cyclic points ($\circ$) associated with the first return in $D_{i_\theta}$. Right: rough approximation of $\Omega$.}
      \end{center}
    \end{figure}
    Around the non-trivial fixed point (the trivial fixed point being $(0,0)$) $\Gamma_{a}$ are single cycles, but around the $p$-cyclic points $\Gamma_{a}$ are the disjoint union of $p$ cycles (the orbit of $\theta_a$ jump from a cycles to another one as the centre runs between the $p$ points of the $p$-cyclic orbit). $\Omega$ must be a path crossing each disconnected components $\Gamma_{a}$ once and only once. For which $\Omega$ can be choose as a path linking one point of each $p$-cyclic orbit of $\varphi$, $\forall p$. A rough approximation of $\Omega$ taking into account only the cyclic points of law period is represented fig. \ref{cat2Cycles}.\\
    The other cyclic points (not associated with the first return times) are just exceptional cyclic orbits immersed in the islands of cycles of quasi-cyclic orbits in the fig. \ref{cat2Cycles}.\\

    $\Sp(\mathcal K) = U(1)$ with $\Sp_{pp}(\mathcal K) = \{e^{\imath n \omega}\}_{n \in \mathbb Z}$, the full Koopman modes associated with $e^{\imath n \omega}$ are represented fig. \ref{KoopModescat2}
          \begin{figure}
        \begin{center}
          \includegraphics[width=10cm]{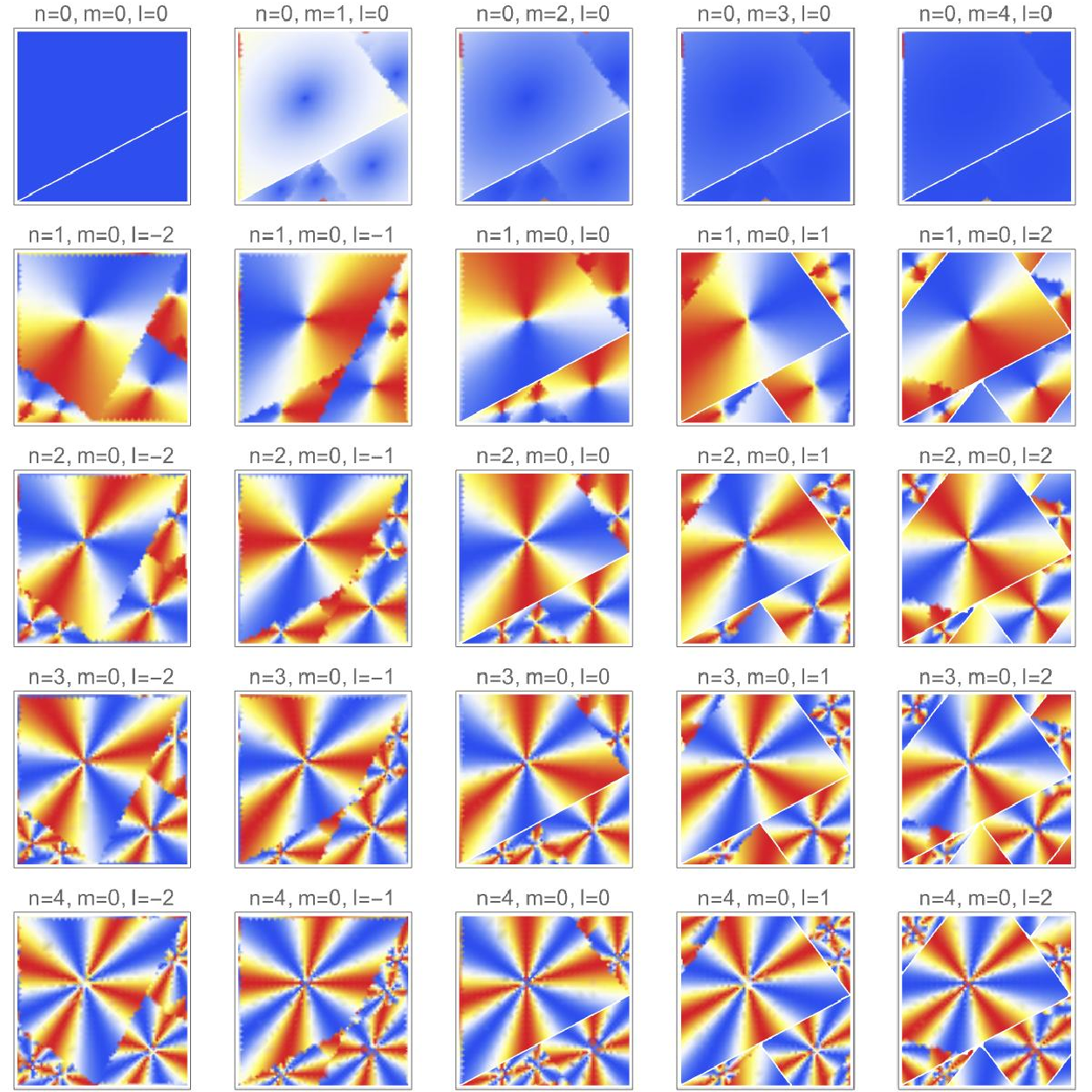}\\
          \caption{\label{KoopModescat2} The real part of the Koopman modes $\RE(f_{nml}(\theta))$ eq. (\ref{solcat}) for different values of $n$, $m$ and $l$, for the cat system defined by $\Phi = \left(\begin{array}{cc} 2\cos \omega & 1 \\ -1 & 0 \end{array}\right)$ with $\frac{\omega}{2\pi}=\frac{\sqrt 2}{2}$.}
        \end{center}
          \end{figure}
These ones reflect the fact that stable observables are waves following the rotation of the orbits around the fixed and cyclic points associated with the first return times in the patches. $n$ is the ``wave number'' of the oscillations around the special points, $m$ is the power of the amplitude growing of the waves with respect to the distance to the special points, and $l$ selects the special points around which the waves oscillate (with $l=0$ we have only the special points in the center of the coloured regions of right fig. \ref{cat2patches}, for $l\not=0$ we have the images by $\varphi^l$ of these points).
    
\end{example}

\subsection{Critical cat map case}
We call ``critical cat map'' the case where $\Phi$ is not diagonalisable and then the theorem \ref{ThFullKoopMode} does not apply. $\Phi$ takes the form of a Jordan block:
\begin{equation}
  \Phi = \left(\begin{array}{cc} 1 & 1 \\ 0 & 1 \end{array}\right)_{(e_a)}
\end{equation}
the degenerate eigenvalue being $1$ since $\det \Phi=1$, with $e_0$ the single eigenvector $\Phi e_0 = e_0$ and $e_* \in \ker(\Phi-\id)^2 \iff \Phi e_* = e_0+e_*$.
\begin{equation}
  \Phi^n = \left(\begin{array}{cc} 1 & n \\ 0 & 1 \end{array}\right)_{(e_a)}
\end{equation}
It follows that
\begin{eqnarray}
  \varphi^n(\theta) & = & \left(\begin{array}{c} e^1_0 (\theta^0+n\theta^*)+e^1_* \theta^* \\ e^2_0(\theta^0+n\theta^*) + e^2_* \theta^* \end{array} \right) \mod \left(\begin{array}{c} 2\pi \\ 2\pi \end{array}\right) \\
  & = & \theta + n e_0 \theta^* \mod (2\pi,2\pi)
\end{eqnarray}
with $\theta^* = e^*_\mu \theta^\mu$. The dynamics induced by $\varphi$ is a discrete translation in the $e_0$ direction over a distance depending on the value of $\theta^*$.\\

The ergodic components are $\Gamma_a = \{a e_*+ t e_0 \mod (2\pi,2\pi)\}_{t\in [0,t_*]}$, $\forall a \in [0,a_0]$, with $t_* = \frac{2\pi e_*^2}{e_0^1e_*^2-e_0^2e_*^2}$ and $a_0=\frac{2\pi e_0^2}{e_0^1e_*^2-e_0^2e_*^2}$ see fig. \ref{TransDiscr}.
\begin{figure}
  \begin{center}
    \includegraphics[width=5cm]{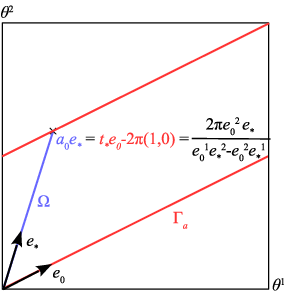}
    \caption{\label{TransDiscr} Representation of $\Gamma_a$ (for $a$ in the neighbourhood of $0$) and $\Omega = [0,a_0] \vec e_*$ for a discrete translation map induced by $\Phi$ non-diagonalisable.}
  \end{center}
\end{figure}
For almost all $\theta \in \Gamma_a$, $\overline{\{\varphi^n(\theta)\}_{n \in \mathbb Z}} = \overline{\{\theta+n a e_0\}_{n \in \mathbb Z}} = \Gamma_a$.\\

With $\Phi \in SL(2,\mathbb Z)$, $\frac{e_0^1}{e_0^2} = \frac{p}{r} \in \mathbb Q$ (with $p$ and $r$ coprimes). $\varphi$ presents a lot of cyclic points of any period $N$. Indeed let $(q^1,q^2) \in \mathbb Z$ such that $q^1 r = q^2 p$. $\theta$ such that $\theta^* = \frac{2q^1 \pi}{Ne_0^1} = \frac{2q^2 \pi}{N e_0^2}$ is $N$-cyclic, since by construction $\theta^\mu + N \theta^* e^\mu_0 - 2q^\mu \pi = \theta^\mu \iff \varphi^N(\theta)=\theta$. $\varphi$ presents then a sensitivity to initial conditions, nevertheless this one occur only for perturbations in the direction $e_*$.\\
A straight strip of any width directed by $e_0$ is invariant by $\varphi$, the flow is then not mixing. But if $A,B \in \mathscr B(\Gamma)$ are such that $\nu_L(e^*A \cap e^* B) \not=0$ (with $\nu_L$ the Lebesgue measure onto $[0,a_0]$ and $e^*$ viewed as the projection onto the line directed by $e_*$ along $e_0$) then $\exists n$, $\varphi^n(B) \cap A \not= \varnothing$. $\varphi$ is topologically ``half-transitive'' (it is mixing in the direction $e_0$ and non-mixing in the direction $e_*$).\\
The case of $\omega=0$ with non-diagonalisable $\Phi$, as limit case between the cases $\omega \in \mathbb R$ and $\omega \in \imath \mathbb R$, is a transition to the chaos ($\omega \in \imath \mathbb R^*$ induced a chaotic map, see below); $\varphi$ being close to a chaotic map (mixing in the direction $e_0$ and sensitive to initial conditions in the direction $e_*$) but not truly chaotic (because it is not completely transitive).\\

The space of observables is:
\begin{equation}
  L^2(\Gamma,d\mu) = \int_{[0,a_0]}^\oplus L^2\left(\Gamma_a,\frac{d\vartheta}{2\pi}\right) \frac{da}{a_0}
\end{equation}
where $\vartheta \in [0,2\pi[$ is a local coordinate onto the $\Gamma_a$ (which is a cycle since $\frac{e_0^1}{e_0^2} \in \mathbb Q$). If $a=\frac{2q^1 \pi}{Ne_0^1} = \frac{2q^2 \pi}{N e_0^2}$ with $q^1 r = q^2 p$, then $\Sp_{pp}(\left. \mathcal K \right|_{\Gamma_a}) = \{e^{\imath \frac{2\pi}{N} n}\}_{n \in \{0,...,N-1\}}$. But these eigenvalues for $n\not=0$ are not synthetic since $\nu_L(\Omega_{\frac{2\pi}{N}n})=0$ (because $\Omega_{\frac{2\pi}{N}n} = \{\frac{2q^1 \pi}{Ne_0^1}\}_{q^1\in \mathbb Z \cap \frac{p}{r} \mathbb Z}$ is countable). For the generic situation the ergodic Koopman modes of $\Gamma_a$ are $\forall e^{\imath \lambda} \in U(1)$:
\begin{equation}
  f_{\lambda a}(\theta) = \sum_{q=-\infty}^{+\infty} e^{\imath q\lambda} \delta_{\varphi^q(ae_*)}(\theta)
\end{equation}
On the whole of $\Gamma$, we can obtain Koopman modes associated with $e^{\imath \lambda}$ by $\int_{[0,a_0]}^\oplus c(a) f_{\lambda a}(\theta) \frac{da}{a_0}$, for any $L^2$ function $c$ such that $c(a_0)=c(0)$ since the two points correspond to the same ergodic component. We can then choose a Fourier basis:
\begin{equation}
  f_{\lambda n}(\theta) = \sum_{q=-\infty}^{+\infty} e^{\imath q \lambda} \int_0^{a_0} e^{\imath 2\pi n \frac{a}{a_0}} \delta_{\varphi^q(ae_*)}(\theta) \frac{da}{a_0}
\end{equation}
$1 \in \Sp_{pp}(\mathcal K)$ is of infinite multiplicity, with eigenfunctions $f_{0n}(\theta) = \int_0^{a_0} e^{\imath 2\pi n \frac{a}{a_0}} \delta_{\Gamma_a}(\theta) \frac{da}{a_0}$. For $\lambda \not=0$, $f_{\lambda a}$ is highly irregular (because $q(\theta)$ such that $ae_*+q(\theta)\theta^*e_0 = \theta \mod (2\pi,2\pi)$ for $\theta \in \Gamma_a$, strongly changes with $\theta$, $q(\theta)$ is discontinuous everywhere). It follows that $f_{\lambda n} \not\in L^2(\Gamma,d\mu)$ and then $\Sp_{pp}(\mathcal K) = \{1\}$ with $\Sp_{cont}(\mathcal K) = U(1) \setminus \{1\}$ (here all synthetic spectral values of the continuum of $\left. \mathcal K \right|_{\Gamma_a}$ are spectral values of the continuum of $\mathcal K$). This irregularity is due to the fact that any region of $\Gamma$ is driven by a lot of cyclic points of any periods.

\begin{example}
  The simplest example is $\Phi = \left(\begin{array}{cc} 1 & 1 \\ 0 & 1 \end{array} \right)$ for which the right and left Jordan generalised eigenvectors are
  \begin{eqnarray}
    e_0 = \left(\begin{array}{c} 1 \\ 0 \end{array}\right) & \qquad & e^0 = (\begin{array}{cc} 1 & -1 \end{array}) \\
    e_* = \left(\begin{array}{c} 1 \\ 1 \end{array}\right) & \qquad & e^* = (\begin{array}{cc} 0 & 1 \end{array})
  \end{eqnarray}
  In that case (by periodicity) $a_0=2\pi$. The ergodic components are
  \begin{equation}
    \Gamma_a  =  \{(a+t \mod 2\pi ,a)\}_{t \in [0,2\pi]} \quad \forall a \in ]0,2\pi]
  \end{equation}
  So $\Gamma_a$ is the fundamental circle of the torus $\theta^2=a$. The sensitivity to initial conditions of this system can be illustrated by fig. \ref{cat3SCI} and the ``half-mixing'' property by fig. \ref{cat3Cycles}.
   \begin{figure}
     \begin{center}
       \includegraphics[width=10cm]{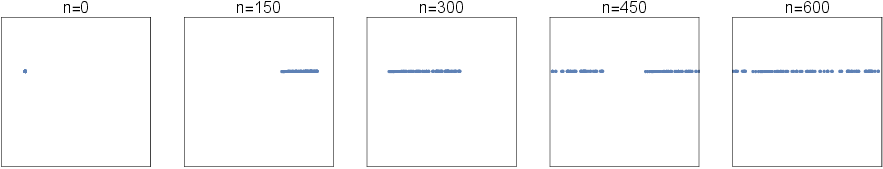}\\
       \caption{\label{cat3SCI} Representation in $\Gamma$ of some perturbations of an orbit $\{\varphi^n(\theta_0)\}_{n\in \mathbb N}$ at different iterations $n$, with $\Phi = \left(\begin{array}{cc} 1 & 1 \\ 0 & 1 \end{array}\right)$. The perturbed orbits are such that $\|\theta_0'-\theta_0\| \leq 10^{-2}$. The error grows linearly (the system is marginally stable) in the direction $e_0$ only.}
     \end{center}
   \end{figure}
   \begin{figure}
     \begin{center}
       \includegraphics[width=10cm]{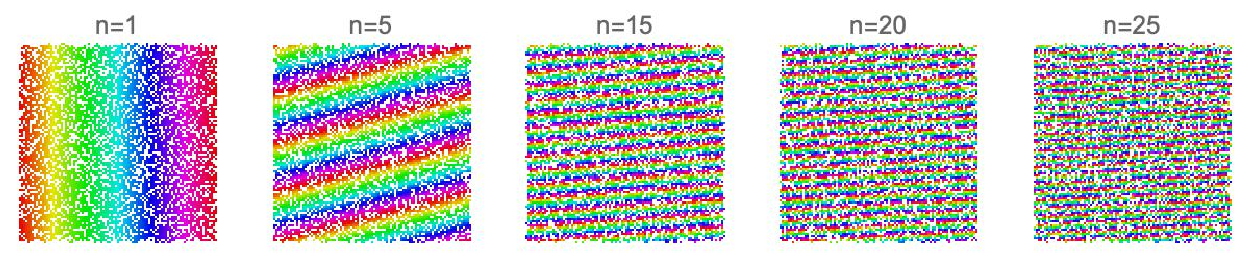}\\
       \caption{\label{cat3Cycles} Representation in $\Gamma$ of some orbits $\{\varphi^n(\theta_0)\}_{n\in \mathbb N}$ at different iterations $n$, with $\Phi = \left(\begin{array}{cc} 1 & 1 \\ 0 & 1 \end{array}\right)$. The orbits are coloured according to the value of $\theta_0^1$.}
     \end{center}
   \end{figure}
Since $\varphi^q(ae_*) = ae_* + qae_0 \mod (2\pi,0)$, the full Koopman modes are then
  \begin{eqnarray}
    f_{\lambda n}(\theta)  & = &\sum_{q=-\infty}^{+\infty} e^{\imath q \lambda} \int_0^{2\pi} e^{\imath n a} \delta(\theta^1-a(1+q) )\delta(\theta^2-a) \frac{da}{2\pi} \\
    & = & \int_0^{2\pi} \sum_{q=-\infty}^{+\infty} e^{\imath (q-1) \lambda+\imath n a} \delta(\theta^1-aq )\delta(\theta^2-a) \frac{da}{2\pi}
  \end{eqnarray}
  $\forall \theta^1$, $\forall \epsilon>0$, $\exists q,m_{\epsilon}(\theta^1) \in \mathbb Z$, such that $|\theta^1-qa+2m_{\epsilon}(\theta^1)\pi|<\epsilon$. The distribution for $\Gamma_a$ can be approximated by
  \begin{eqnarray}
    f_{\lambda a}^{(\epsilon)}(\theta) & = & e^{\imath (\frac{\theta^1+2m_{\epsilon}(\theta^1)\pi}{a}-1)\lambda} \delta(\theta^2-a) \\
    & = & e^{\imath (\frac{\theta^1}{a}-1)\lambda+ \imath \phi_{\epsilon a}(\theta^1)} \delta(\theta^2-a)
  \end{eqnarray}
  The value of the phase $\phi_{\epsilon a}(\theta^1) = \frac{2m_\epsilon(\theta^1)\pi}{a}$ changes strongly with $\theta^1$, taking the role of a pseudo-random phase. This fact is in relation with the ``half-mixing'' property of the flow inner to $\Gamma_a$. $f_{n\lambda}$ can be approximated by
  \begin{equation}
    f_{\lambda n}^{\mathfrak G}=\{e^{\imath (q_{\theta^1}-1)\lambda+\imath n \theta^2}\}_{\theta \in \mathfrak G}
  \end{equation}
  with $q_{\theta^1} \in \mathbb N$ such that $|\theta^1-(q_{\theta^1}\theta^2 \mod 2\pi)|<\epsilon_{\mathfrak G}$, where $\mathfrak G$ is a set of $n_{\mathfrak G}$ points uniformly randomly chosen in $\Gamma$ and $\epsilon_{\mathfrak G} = \frac{2\pi}{\sqrt{n_{\mathfrak G}}}$ (mean distance along one axis between two nearest neighbour points of $\mathfrak G$). Due to the pseudo-random phase, it is better to choose a Monte Carlo numerical representation of $f_{\lambda n}$ in place of a representation onto an uniform grid. This permits to avoid numerical artefacts in the representation. Some $f_{\lambda n}^{\mathfrak G}$ are represented fig. \ref{KoopModescat3v2}. 
      \begin{figure}
        \begin{center}
          \includegraphics[width=10cm]{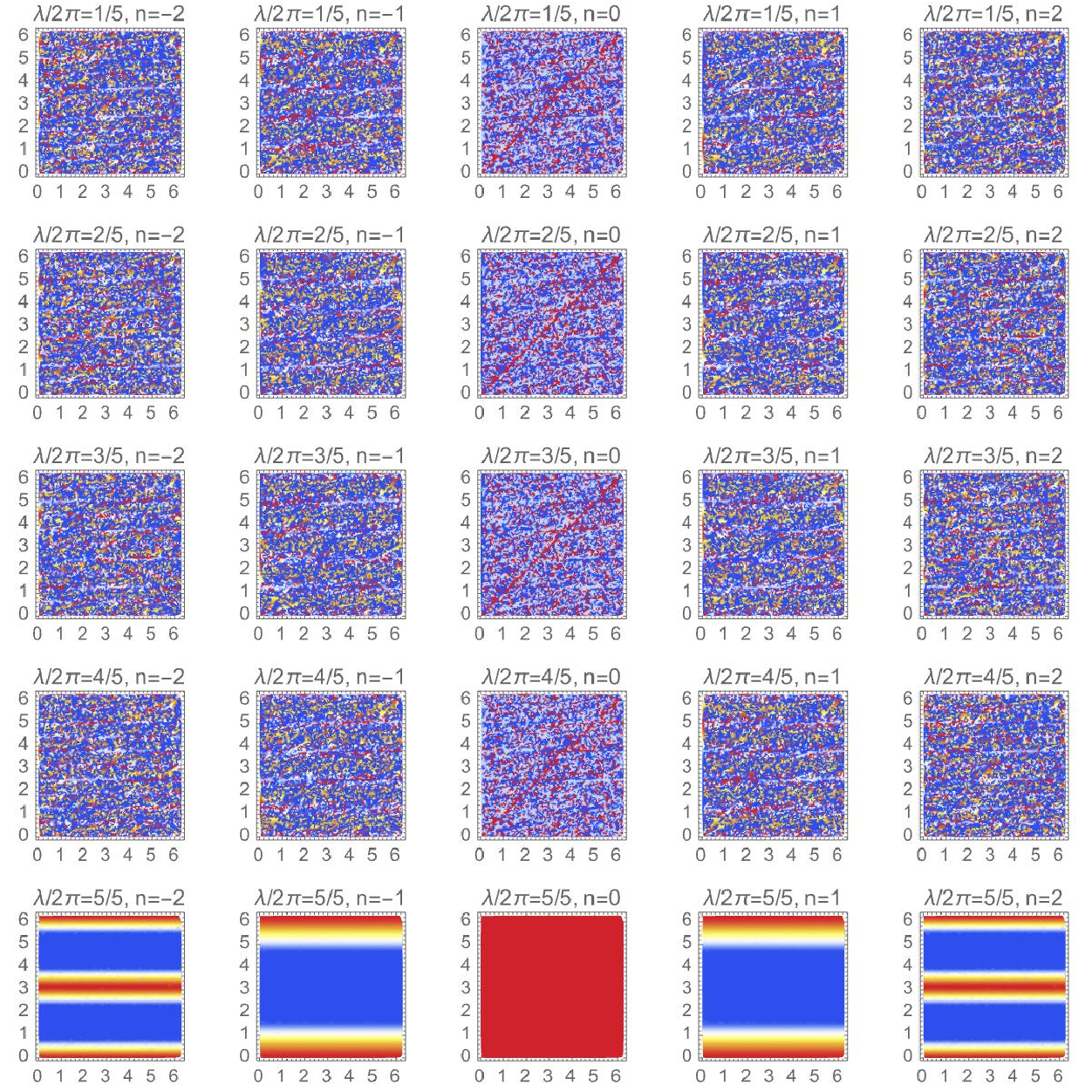}\\
          \caption{\label{KoopModescat3v2} The real part of the Koopman modes $\RE(f_{\lambda n}^{\mathfrak G}(\theta))$ with $n_{\mathfrak G}=10 000$ for the cat system defined by $\Phi = \left(\begin{array}{cc} 1 & 1 \\ 0 & 1 \end{array}\right)$. The case $\lambda=2\pi \iff \lambda=0$ corresponds to the eigenfunction of $\mathcal K$ associated with the single eigenvalue $1$.}
        \end{center}
      \end{figure}
   Except for $\lambda=0$, the stable observables are then close to noised signals, but these ones are not white noises (the oscillations of the cases $\lambda=0$ seems in part present for $\lambda \not=0$). A kind of regularity seems to occur in the noise. To enlighten this one, it is more useful to consider the Koopman distributions for a single $q$:
   \begin{eqnarray}\label{modescat3}
     f_{\lambda n q}(\theta) & = & e^{\imath (q-1) \lambda} \int_0^{2\pi} e^{\imath n a} \delta(\theta^1-qa)\delta(\theta^2-a)\frac{da}{2\pi} \\
     & = & e^{\imath (q-1)\lambda + \imath n \theta^2} \delta(\theta^1-q\theta^2)
   \end{eqnarray}
   $f_{\lambda n}$ being the superposition of all $f_{\lambda n q}$. These distributions represent the coherent modes in the dynamics induced by $\varphi$ because the pseudo-random phase is fixed to a single value along their support lines $\theta^1=q\theta^2$. $f_{\lambda n q}$ are represented fig. \ref{KoopModescat3}.
      \begin{figure}
        \begin{center}
          \includegraphics[width=10cm]{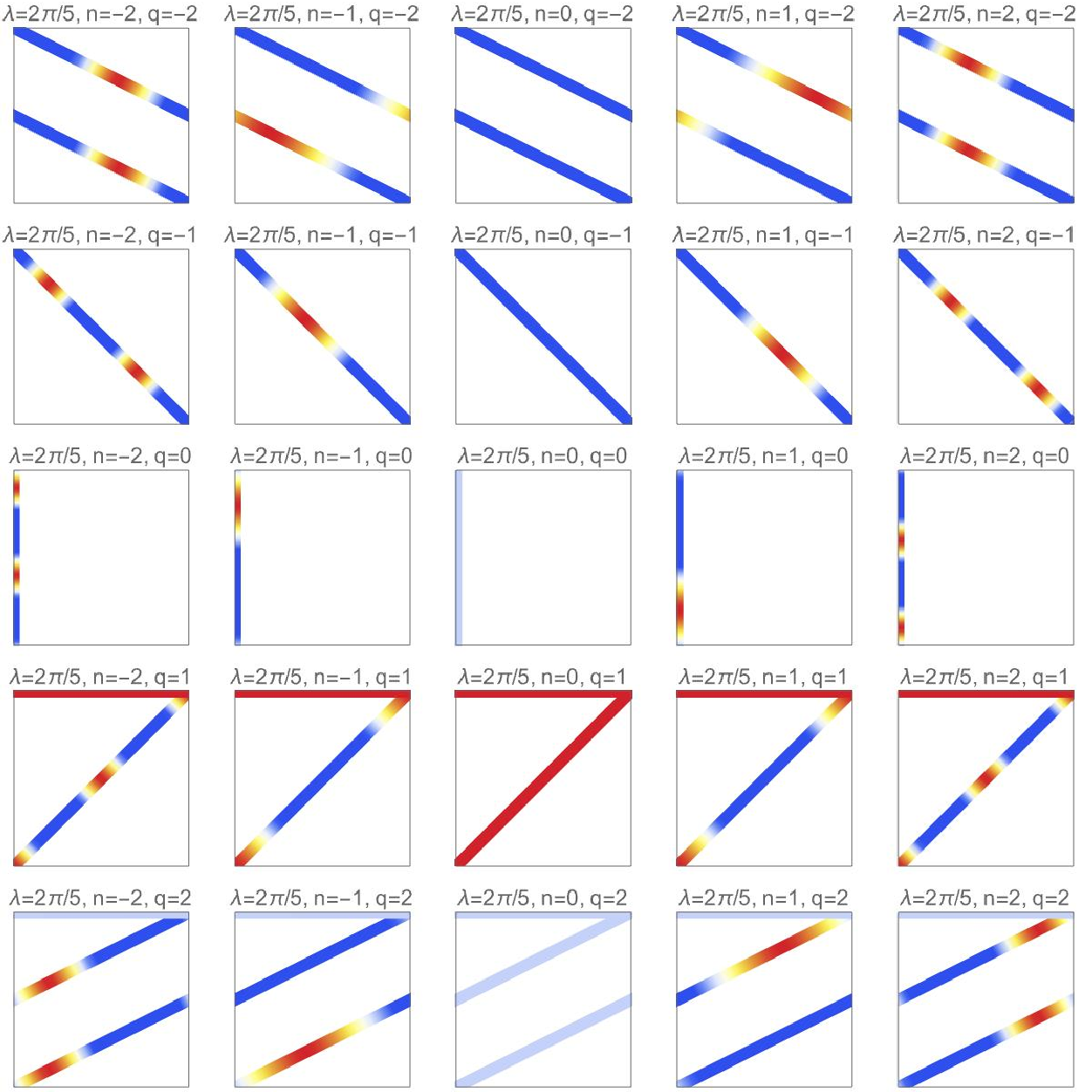}\\
          \caption{\label{KoopModescat3} The real part of the Koopman modes $\RE(f_{\lambda n q}(\theta))$ eq. (\ref{modescat3}) for the cat system defined by $\Phi = \left(\begin{array}{cc} 1 & 1 \\ 0 & 1 \end{array}\right)$.}
        \end{center}
      \end{figure}
$q$ determinates the direction of the coherent lines in $\Gamma$, $n$ is the wave number of the oscillations along these lines, and $\lambda$ (with $q$) determines the initial phase of these oscillations.     
\end{example}

\subsection{Chaotic cat map case}
We consider the case where $\omega = -\imath \pmb \lambda$ where $\pmb \lambda >0$ is the main Lyapunov exponent (see \cite{Goldstein}) of the cat map.
\begin{equation}
  \Phi =  \left(\begin{array}{cc} e^{\pmb \lambda} & 0 \\ 0 & e^{-\pmb \lambda} \end{array} \right)_{(e_a)}
\end{equation}
$\varphi$ is an Anosov flow (\cite{Lasota}), $e_+$ being the unstable direction whereas $e_-$ being the stable one. The system is mixing (and then ergodic) on the whole of $\Gamma$ for almost all initial condition $\theta_0$. There exists a dense set (but of zero measure) of cyclic points of all periods.\\

The Koopman modes associated with $e^{\imath \lambda}$ of the single ergodic component are the distributions:
\begin{equation}
  f_{\lambda \theta_0}(\theta) = \sum_{q=-\infty}^{+\infty} e^{\imath \lambda q} \delta_{\varphi^q(\theta_0)}(\theta)
\end{equation}
but due to the sensitivity to the initial conditions, $\|\varphi^q(\theta_0')-\varphi^q(\theta_0)\| \sim e^{q \pmb \lambda} \|\theta_0'-\theta_0\|$ (``$\sim$'' stands for the linearised behaviour), $f_{\lambda \theta_0}$ is then highly depend on $\theta_0$ and so we have a different Koopman mode for $\mu$-almost all initial condition $\theta_0 \in \Gamma$. A better approach can consist to use the expression of the full Koopman modes: $\forall \theta \in D_i$
\begin{equation}\label{KoopModeChaos}
  f_{\lambda \alpha l}(\theta) =  e^{\imath \frac{\lambda}{\pmb \lambda} \left(\alpha \ln (\varphi^l(\theta)^+-\phi^+_{i_{\varphi^l(\theta)}}) - (1-\alpha) \ln (\varphi^l(\theta)^--\phi^-_{i_{\varphi^l(\theta)}})\right)}
\end{equation}
$\alpha \in \mathbb R$ is not restricted, inducing that $\lambda$ is of uncountable infinite ``multiplicity''. This expression, which might appear to be that of a regular function, is misleading. Indeed we recall that
\begin{equation}
  \phi_{i_\theta}^\pm = 2\pi \frac{\sum_{q=0}^{N_\theta-1} e^{\pm (N_\theta-1-q)\pmb \lambda}m^\pm_{\imath_{\varphi^q(\theta)}}}{1-e^{\pm N_{\theta} \pmb \lambda}}
\end{equation}
where $N_\theta$ is the time of the first return of the flow starting at $\theta$ in $D_{i_\theta}$. But due to the sensitivity to initial conditions, $N_\theta$ is strongly dependent on $\theta$, inducing that the parameters $\phi_{i_\theta}^\pm$ appears as pseudo-random values. $f_{\lambda \alpha l}$ is then a highly irregular function. $f_{\lambda \alpha l} \not\in L^2(\Gamma,d\mu)$ since $f_{\lambda \alpha l}$ is not $\mu$-measurable.\\

 Due to the sensitivity to initial conditions, in the neighbourhood of $\theta$, we find an infinity of cyclic points $\phi_{i_{\theta'}}$ (for $\theta'$ in the neighbourhood of $\theta$) of all periods. In contrast with the cases $\omega \in \mathbb R^*$, where all point is cyclic or under the influence of a single cyclic orbit, $\mu$-almost all points are here under the influence of an infinity of cyclic orbits of all period. This is precisely this property which induces the chaotic behaviour of the orbits. For the Koopman representation, this is this property which is responsible for the irregularity of the Koopman modes.\\

 The two expressions of the Koopman modes (ergodic and full) are related as follows. $\sum_q f_{\lambda \alpha l}(\theta) \delta_{\varphi^q(\theta_0)}(\theta) = f_{\lambda \alpha l}(\theta_0) \sum_q e^{\imath q \lambda} \delta_{\varphi^q(\theta_0)} = e^{\imath \zeta_{\alpha l \theta_0}} f_{\lambda \theta_0}(\theta)$ (with $f_{\lambda \alpha l}(\theta_0)=e^{\imath \zeta_{\alpha l \theta_0}}$).\\

 Remark: consider the following change of Koopman modes:
\begin{eqnarray}
  \int_{-\infty}^{+\infty} A_{\lambda l}(\alpha) f_{\lambda \alpha l}d\alpha & = & e^{-\imath \frac{\lambda}{\pmb \lambda} \xi^-} \int_{-\infty}^{+\infty} A_{\lambda l}(\alpha) e^{\imath \frac{\lambda}{\pmb \lambda} \alpha(\xi^++\xi^-)} d\alpha \\
    & = & \sqrt{2\pi} e^{-\imath \frac{\lambda}{\pmb \lambda} \xi^-} \hat A_{\lambda l}\left(\frac{\lambda}{\pmb \lambda}(\xi^++\xi^-)\right)
\end{eqnarray}
where $\xi^\pm = \ln(\varphi^l(\theta)^\pm-\phi^\pm_{i_{\varphi^l(\theta)}})$ and $\hat A_{\lambda l}$ is the Fourier transformation of $A_{\lambda l}$. We have well
\begin{equation}
  \mathcal K e^{-\imath \frac{\lambda}{\pmb \lambda} \xi^-} \hat A_{\lambda l}\left(\frac{\lambda}{\pmb \lambda}(\xi^++\xi^-)\right) = e^{\imath \lambda} e^{-\imath \frac{\lambda}{\pmb \lambda} \xi^-} \hat A_{\lambda l}\left(\frac{\lambda}{\pmb \lambda}(\xi^++\xi^-)\right)
\end{equation}
since $\xi^\pm \circ \phi = \xi^\pm \pm \pmb \lambda$, for any $\hat A_\lambda \circ \frac{\lambda}{\pmb \lambda}(\xi^++\xi^-)$ square integrable or not, or even a singular distribution.\\

$\Sp(\mathcal K) = U(1)$, $e^0=1$ being the single pure point eigenvalue which is not degenerate. For a $N$-cyclic orbit $\Orb(\theta_*)$ we have $\Sp_{pp}(\left. \mathcal K \right|_{\Orb(\theta_*)}) = \{e^{\imath \frac{2\pi n}{N_{\theta_*}}}\}_{n \in \{0,...,N_{\theta_*}-1\}}$, but these eigenvalues are not synthetic.\\

\begin{example}
  We consider the Arnold's cat map defined by $\Phi = \left(\begin{array}{cc} 1 & 1 \\ 1 & 2 \end{array} \right)$. The mean Lyapunov exponent is $\pmb \lambda = \ln \frac{3+\sqrt 5}{2}$ with the eigenvectors:
  \begin{equation}
    e_\pm = \frac{1}{\sqrt{\mathfrak g_\pm+2}} \left(\begin{array}{c} 1 \\ \mathfrak g_\pm \end{array} \right) \qquad \text{with }\mathfrak g_\pm = \frac{1\pm\sqrt 5}{2}
  \end{equation}
  $[0,2\pi[^2$ is divided in four patches: $D_1=\{(\theta^1,\theta^2)\in [0,2\pi[^2, \theta^2 \leq -\frac{\theta^1}{2}+\pi\}$, $D_2=\{(\theta^1,\theta^2)\in [0,2\pi[^2, -\frac{\theta^1}{2}+\pi \leq \theta^2 \leq -\theta^1+2\pi\}$, $D_3=\{(\theta^1,\theta^2)\in [0,2\pi[^2, -\theta^1+2\pi \leq \theta^2 \leq -\frac{\theta^1}{2}+2\pi\}$ and $D_4=\{(\theta^1,\theta^2)\in [0,2\pi[^2, \theta^2 \geq -\frac{\theta^1}{2}+2\pi\}$; the associated translation are $m_1=(0,0)$, $m_2=(0,-1)$, $m_3=(-1,-1)$ and $m_4=(-1,-2)$; see fig. \ref{cat5patches}.
 \begin{figure}
   \begin{center}
     \includegraphics[width=5cm]{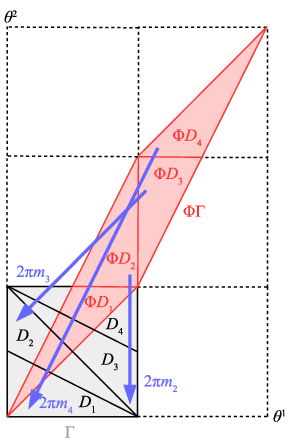}
     \caption{\label{cat5patches} Graphical representation of $\varphi: \theta \mapsto \Phi \theta + 2\pi m_i$, with $\Phi = \left(\begin{array}{cc} 1 & 1 \\ 1 & 2 \end{array} \right)$. $\Gamma$ is represented by $[0,2\pi[^2$.}
   \end{center}
 \end{figure}

 Due to the positive mean Lyapunov exponent, this system is sensitive to initial conditions as illustrated by fig. \ref{cat5SCI}.
   \begin{figure}
     \begin{center}
       \includegraphics[width=10cm]{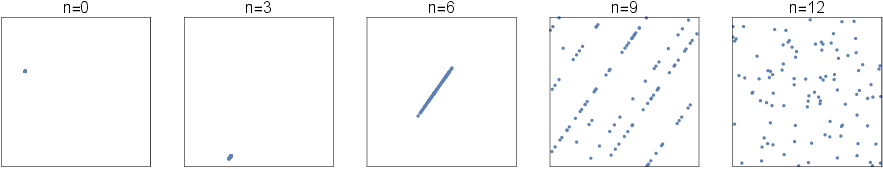}\\
       \caption{\label{cat5SCI} Representation in $\Gamma$ of some perturbations of an orbit $\{\varphi^n(\theta_0)\}_{n\in \mathbb N}$ at different iterations $n$, with $\Phi = \left(\begin{array}{cc} 1 & 1 \\ 1 & 2 \end{array}\right)$. The perturbed orbits are such that $\|\theta_0'-\theta_0\| \leq 10^{-2}$. The errors grow exponentially in accordance with the positive mean Lyapunov exponent.}
     \end{center}
   \end{figure}
   The system is also mixing, as illustrated fig. \ref{cat5Cycles}.
     \begin{figure}
     \begin{center}
       \includegraphics[width=10cm]{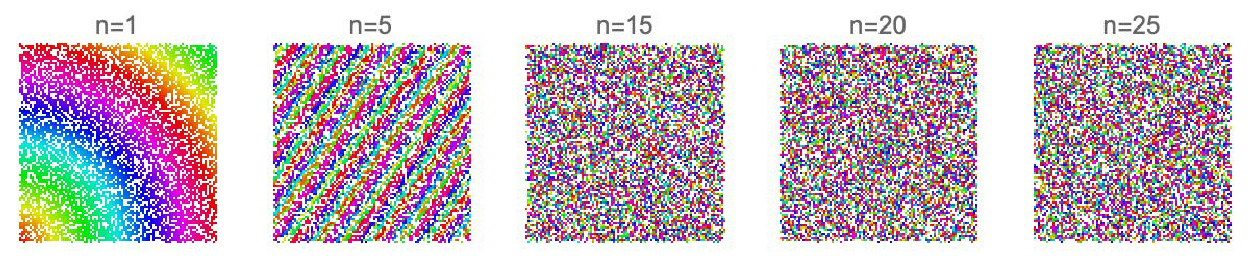}\\
       \caption{\label{cat5Cycles} Representation in $\Gamma$ of some orbites $\{\varphi^n(\theta_0)\}_{n\in \mathbb N}$ at different iterations $n$, with $\Phi = \left(\begin{array}{cc} 1 & 1 \\ 1 & 2 \end{array} \right)$. The orbits are coloured according to the distance of $\theta_0$ from $(0,0)$.}
     \end{center}
     \end{figure}

The Koopman modes $f_{\lambda \theta_0}$ are similar to white noises (see fig. \ref{KoopModescat5v2}).
     \begin{figure}
        \begin{center}
          \includegraphics[width=10cm]{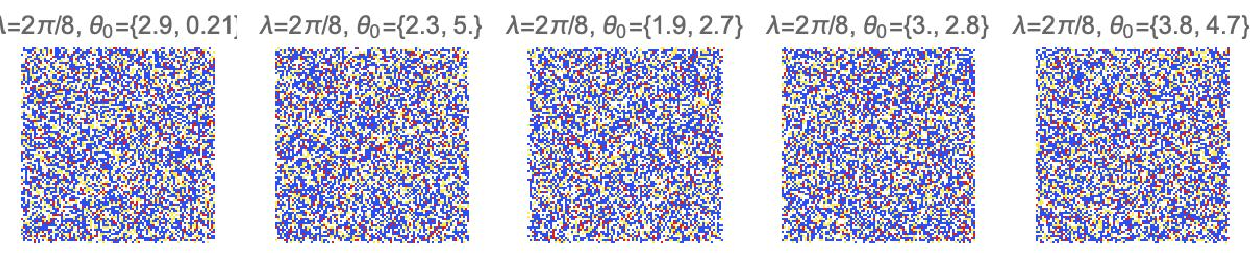}\\
          \caption{\label{KoopModescat5v2} The real part of the Koopman modes $\RE(f_{\lambda \theta_0}(\theta))$ estimated with orbits of 10 000 points for the Arnold's cat map defined by $\Phi = \left(\begin{array}{cc} 1 & 1 \\ 1 & 2 \end{array} \right)$.}
        \end{center}
      \end{figure}
    The Koopman modes $f_{\lambda \alpha l}$ can be approximated by
\begin{equation}
  f_{\lambda \alpha l}^{\mathfrak G} = \{f_{\lambda \alpha l}(\theta)\}_{\theta \in \mathfrak G}
\end{equation}
where $\mathfrak G$ is a set of $n_{\mathfrak G}$ points uniformly randomly chosen in $\Gamma$, the time of return of each point $\theta \in \mathfrak G$ is computed by iteration of the dynamics. These ones are represented fig. \ref{KoopModescat5}.
     \begin{figure}
        \begin{center}
          \includegraphics[width=10cm]{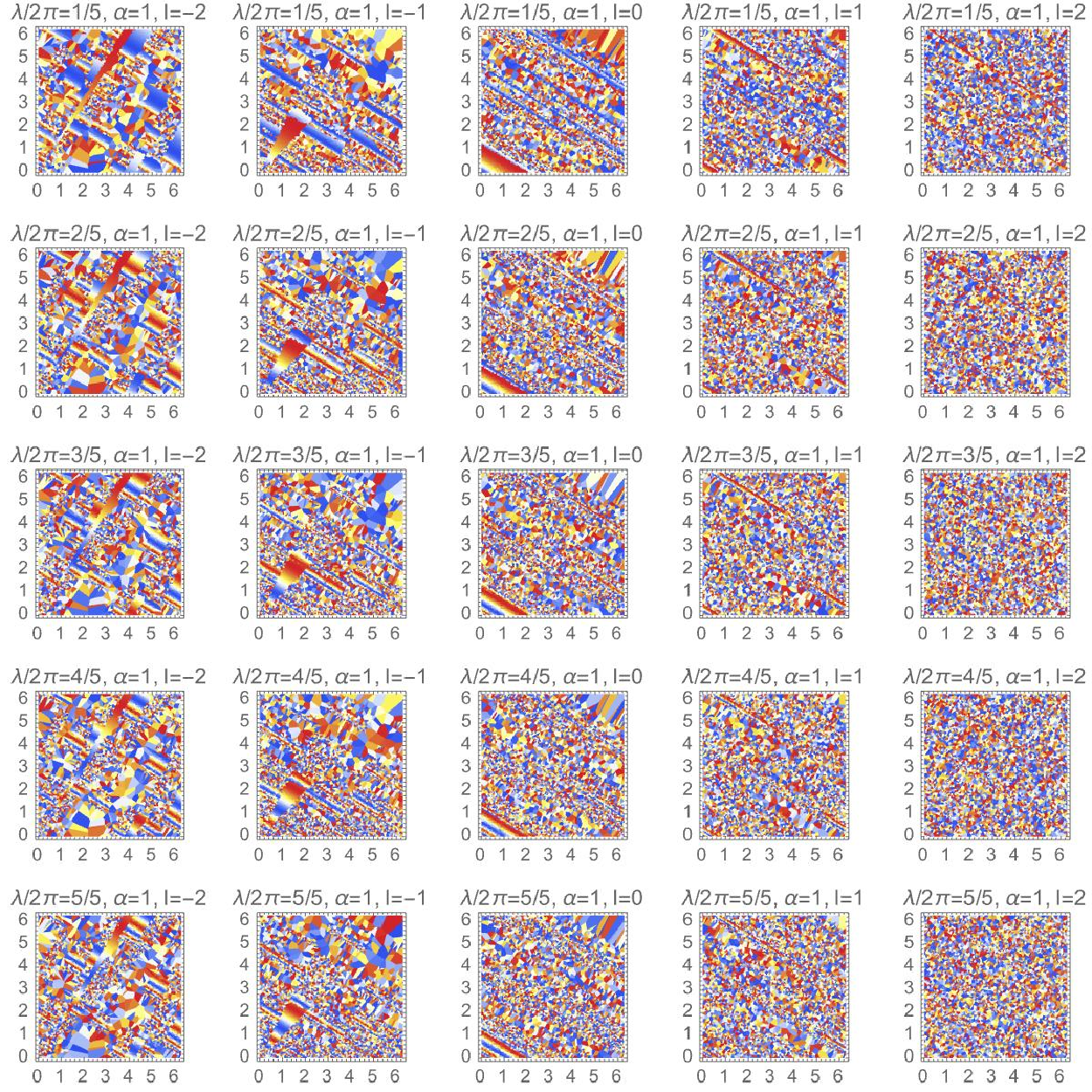}\\
          \caption{\label{KoopModescat5} The real part of the Koopman modes $\RE(f_{\lambda \alpha l}^{\mathfrak G}(\theta))$ with $n_{\mathfrak G}=10000$ for the Arnold'cat map defined by $\Phi = \left(\begin{array}{cc} 1 & 1 \\ 1 & 2 \end{array} \right)$.}
        \end{center}
     \end{figure}
 Anew the plots seem to be essentially noises due to the pseudo-random character of the first return time. The roles of $\lambda$, and $\alpha$, $l$ (for the Full Koopman mode expression) or $\theta_0$ (for the ergodic Koopman mode expression) are not obvious. Due to the pseudo-random character of the Koopman modes, they are just parameters ``initialising'' the pseudo-random generator of white noises. The strong chaoticity of the Arnold's cat map induces that the only stable observables are undistinguishable from white noises. Nevertheless the expression eq. \ref{KoopModeChaos} show that these white noises are not truly random and are in fact deterministic. This is the same thing than for the orbits fig. \ref{cat5Cycles}, they appear as white noises whereas the dynamics is deterministic.
\end{example}

\subsection{Summary}
The properties of cat dynamical systems are summarised table \ref{cat}.
\begin{table}
  \caption{\label{cat} Spectral properties of a classical cat map $\varphi(\theta) = \Phi \theta \mod (2\pi,2\pi)$, with respect to $\Sp(\Phi) = \{e^{\pm \imath \omega}\}$.  $\omega=0$ concerns only the cases where $\Phi$ is not diagonalisable and not the trivial case $\Phi=\id$. ``spp'' stands for ``synthetic pure point'', ``npp'' for ``non-synthetic pure point'', ``scont'' for ``synthetic continuous'' and ``ncont'' for ``non-synthetic continuous''.\\
    $\Gamma_a$ stands for a generic ergodic component (since the cat map is ergodic on the whole of $\Gamma$ for $\omega \in \imath \mathbb R^*$, the ergodic decomposition does not hold in this case). $\theta_*$ are some ``exceptional'' points which are $\mu$-almost nowhere in $\Gamma$.}
  \begin{tabular}{r|c|c|c|c}
    & $\frac{\omega}{2\pi} \in \mathbb Q^*$ & $\frac{\omega}{2\pi} \in \mathbb R \setminus \mathbb Q$ & $\omega=0$ & $\omega \in \imath \mathbb R^*$ \\
    \hline\hline
    cyclicality of $\varphi$ & cyclic & quasi-cyclic & critical & chaotic \\
    $\varphi$ mixing & no & no & half & yes \\
    S.I.C. of $\varphi$ & no & no & yes & yes \\
    \hline
    $\Sp_{spp}(\left. \mathcal K \right|_{\Gamma_a})$ & $\{e^{\imath \frac{2\pi n}{N}}\}_{n<N}$ & $\{1\}$ & $\{1\}$ &  \\
    $\Sp_{npp}(\left. \mathcal K \right|_{\Orb(\theta_*)})$ & $\varnothing$ & $\{e^{\imath \frac{2\pi n}{N_*}}\}_{n<N_*}$ & $\{e^{\imath \frac{2\pi n}{N_*}}\}_{n<N_*}$ & $\{e^{\imath \frac{2\pi n}{N_*}}\}_{n<N_*}$  \\
    $\Sp_{scont}(\left. \mathcal K \right|_{\Gamma_a})$ & $\varnothing$ & $U(1)\setminus\{1\}$ &  $U(1)\setminus\{1\}$ & \\
    $\Sp_{ncont}(\left. \mathcal K \right|_{\Gamma_a})$ & $\varnothing$ & $\varnothing$ & $\varnothing$ &  \\
    \hline
    $\Sp_{pp}(\mathcal K)$ & $\{e^{\imath \frac{2\pi n}{N}}\}_{n<N}$ & $\{e^{\imath n \omega}\}_{n \in \mathbb Z}$ & $\{1\}$ & $\{1\}$ \\
    $\Sp_{cont}(\mathcal K)$ & $\varnothing$ & $U(1) \setminus \{e^{\imath n \omega}\}_{n \in \mathbb Z}$ & $U(1) \setminus \{1\}$ & $U(1) \setminus \{1\}$\\
    multiplicity of $1$ & $\infty$ & $\infty$ & $\infty$ & $1$ 
  \end{tabular}
\end{table}
The regular dynamics are characterised by a non-trivial Koopman pure point spectrum, the cyclic case differing from the quasi-cyclic case by a finite number of eigenvalues. The erratic dynamics are characterised by a trivial Koopman pure point spectrum, the chaotic case differing from the critical one by the non-degenerescence of $1$ (implying a flow ergodic on the whole of $\Gamma$). The Koopman spectra by the ergodic decomposition are similar in the quasi-cyclic and the critical cases, in accordance with their ``position'' of transition to chaos.\\
The full Koopman modes of the regular dynamics appear as ``waves'' in $\Gamma$, whereas those of the erratic dynamics appear as ``noises''. This is in accordance with the interpretation of the full Koopman modes as stable observables defined coherently on the whole of $\Gamma$. For the transitional cases, the waves of the quasi-cyclic maps are strongly cut following a puzzle of coherent regions in $\Gamma$; whereas the noises of the critical maps hide a little regularity along some lines in $\Gamma$.

\section{Conclusion}
We have seen that full Koopman modes, representing stable classical observables coherently defined on the whole phase space, can be built for classical cat maps. These full Koopman modes are strongly related to the special orbits (fixed points and $N$-cyclic orbits) in the complexified phase space. Full Koopman modes are continuous functions in the phase space regions driven by a single cyclic point; they are highly irregular functions in regions driven by a lot of cyclic points of any periods. These full Koopman modes can be naturally decomposed into Koopman modes associated with the ergodic components. But the converse is not obvious, the knowledge of Koopman modes on each ergodic components does not permit to built coherent full Koopman modes (because of the synthetic spectral values of the continuum which can be pure point eigenvalues or spectral value of the continuum of the synthetised Koopman operator). Moreover in contrast with the Koopman modes of the ergodic decomposition, the full Koopman modes form countable basis of the eigenspaces (even if the spectrum is continuous as for the quasi-cyclic cat maps) except for the highly irregular cases (critical and chaotic cat maps). \\

The generalisation of the present results to another classes of systems is an interesting question. It is better to be cautious about the matter, because cat maps have a very special property: they are quasilinear (the only nonlinearity is the modulo $2\pi$, implying that the Jacobian matrix of the flow is the same on the whole phase space). The quasilinearity plays a significant role in the production of the present results, since it permits to relate the Koopman modes to the cyclic points with the eigenvalues and the eigenvectors of $\Phi$. The same thing does probably not occur with the Jacobian matrices of more nonlinear flows.

\section*{References}


\begin{thebibliography}{0}
\bibitem{Koopman1} B.O. Koopman, {\it Proc. Natl. Acad. Sci. USA} {\bf 17}, 315 (1931).
\bibitem{Koopman2} B.O. Koopman and J. von Neumann,  {\it Proc. Natl. Acad. Sci. USA} {\bf 18}, 225 (1933).
\bibitem{Lasota} A. Lasota and M.C. Mackey {\it Chaos, Fractals and Noise} (Springer, New-York, 1994).
\bibitem{Goldstein} H. Goldstein, C. Poole and J. Safko, {\it Classical Mechanics} (Addison Wesley, New York, 2000).
\bibitem{Eisner} T. Eisner, B. Farkas, M. Haase and R. Nagel, {\it Operator Theoretic Aspects of Ergodic Theory} (Springer, New York, 2015).
\bibitem{Hislop} P.D. Hislop and I.M. Sigal, {\it Introduction to Spectral Theory} (Springer-Verlag, New York, 1996).
\bibitem{Mezic} I. Mezi\'c, {\it Notices Amer. Math. Soc.} {\bf 68}, 1087 (2021). 
\bibitem{Reed} M. Reed and B. Simon, {\it Methods of Modern Mathematical Physics IV: Analysis of Operators} (Academic Press, San Diego, 1978).
\bibitem{Chow} T.R. Chow, {\it Math. Ann.} {\bf 188}, 285 (1970).
\bibitem{Cho} S.J. Cho, {\it J. Korean Math. Soc.} {\bf 17}, 259 (1981).
\bibitem{Azoff} E.A. Azoff, {\it Trans. Amer. Math. Soc.} {\bf 197}, 211 (1974).
\end{thebibliography}
\end{document}